\documentclass[%
 reprint,
 amsmath,amssymb,
 aps,
]{revtex4-1}
\usepackage{multirow}
\usepackage{amsthm,amssymb}
\usepackage{mathtools}
\usepackage{graphicx}
\usepackage{dcolumn}
\usepackage{bm}
\usepackage{dsfont}
\usepackage{hyperref}
\usepackage{dsfont}
\usepackage[caption=false]{subfig}
\usepackage{float}
\usepackage{afterpage}
\usepackage{mathrsfs}
\usepackage[caption=false]{subfig}
\usepackage{algorithm}
\usepackage{algpseudocode}

\usepackage{bbm}

\newtheorem{theorem}{Theorem}

\newtheorem{lemma}[theorem]{Lemma}

\newcommand{\bbf}{\mathbbm{f}}
\newcommand{\bbF}{\mathbbm{F}}
\newcommand{\bbV}{\mathbbm{V}}
\newcommand{\bbU}{\mathbbm{U}}
\newcommand{\bbi}{\mathbbm{i}}

\newcommand{\ones}{\mathbf{1}}

\newcommand{\Mod}{\operatorname{Mod}}

\newcommand{\Adm}{\operatorname{Adm}}

\begin{document}

\preprint{APS/123-QED}

\title{Generalization of Effective Conductance Centrality for Egonetworks}

\email{Corresponding author: heman@ksu.edu} 
\author{Heman Shakeri$^{1}$}
\author{Behnaz Moradi-Jamei$^{2}$}
\author{Pietro Poggi-Corradini$^3$}
\author{Nathan Albin$^3$}%
\author{Caterina Scoglio$^1$}
\affiliation{$^1$Electrical and Computer Engineering Department, Kansas State University, Manhattan, Kansas, USA}
\affiliation{$^2$Department of Statistics, Kansas State University, Manhattan, Kansas, USA}
\affiliation{$^3$Department of Mathematics, Kansas State University, Manhattan, Kansas, USA}
\date{\today}

\begin{abstract}
	We study the popular centrality measure known as effective conductance or in some circles as information centrality. This is an important notion of centrality for undirected networks, with many applications, e.g., for random walks, electrical resistor networks, epidemic spreading, etc. In this paper, we first
reinterpret this measure in terms of modulus (energy) of families of walks on the network. This modulus centrality measure coincides with the effective conductance measure on simple undirected networks, and extends it to much more general situations, e.g., directed networks as well. Secondly, we study a variation of this modulus approach in the egocentric network paradigm. Egonetworks are networks formed around a focal node (ego) with a specific order of neighborhoods.  We propose efficient analytical and approximate methods for computing these measures on both undirected and directed networks. Finally, we describe a simple method inspired by the modulus point-of-view, called \textit{shell degree}, which proved to be a useful tool for network science.
\end{abstract}

\maketitle
The concept of information centrality was first introduced in \cite{stephenson1989rethinking} and was later reinterpreted in terms of electrical conductance in \cite{klein1993resistance}. Given a network $G=(V,E)$ and a node $a\in V$, the effective conductance centrality of $a$ is defined as
\begin{equation}\label{eq:effcondcentr}
\mathcal{C}_{\text{eff}}(a):=\sum_{b\in V\setminus a}\frac{1}{\mathcal{R}_{\text{eff}}(a,b)}.
\end{equation}
where  $\mathcal{R}_{\text{eff}}(a,b)$ is effective resistance distance between $a$ and $b$.
Note that this measure considers every possible path that electrical current flow might take from $a$ to an arbitrary sink $b$.

The situation can be clarified by introducing the notion of  modulus of families of walks. This is a way of measuring the richness of certain families of walks on a network (and beyond, see \cite{shakeri2017complex, shakeri2017loop, shakeri2016generalized}). 
Given two nodes $a$ and $b$ we may consider the connecting family $\Gamma(a,b)$ of all walks $\gamma$ from $a$ to $b$. Then, given edge density $\rho:E\rightarrow \mathbb{R}$ for $p\in[1,\infty]$, we define 
$\ell_\rho\left(\Gamma\right):=\min_{\gamma\in\Gamma}\ell_\rho\left(\gamma\right)$ where $\ell_\rho(\gamma)$ is the {\it $\rho$-length} of a walk $\gamma$: 
\begin{equation}
\ell_\rho\left(\gamma\right):=\sum_{e\in \gamma}\rho\left(e\right).
\end{equation}
The $p$-modulus of $\Gamma$ is defined as 
\begin{equation}\label{eq:modulus}
\Mod_{p}\left(\Gamma\right) := \min_{\ell_\rho(\Gamma) \geq 1}\text{Energy}_{p}\left(\rho\right)
\end{equation}
Namely, we minimize the energy of candidate edge-densities $\rho$ subject to the $\rho$-length of every walk in $\Gamma$ being greater than or equal one, \textit{i.e.}, $\ell_\rho(\Gamma)\ge 1$.
These densities can be interpreted as costs of using the given edge. The energy we consider is 
\begin{equation}\label{eq_energy_p}
\text{Energy}_{p}(\rho) =\sum_{e\in E} \sigma(e)\rho\left(e\right)^p,
\end{equation}
where $\sigma(e)> 0$, is the conductance of the edge $e$.
Thus modulus is a constrained convex optimization problem that has a unique extremal density $\rho^*$ when $1<p<\infty$.
This point of view allows for much more flexibility, because it can be applied to a variety of different families of objects: walks, cycles, trees, etc, and also works when the underlying network is directed or weighted.
Moreover, modulus has very useful  properties of $\Gamma$-monotonicity and countable subadditivity. 

	
	For undirected networks the effective conductance between $a$ and $b$ is connected to  $\Mod_2(\Gamma(a, b))$ as follow \cite{duffin:1962jmaa, Albin2014} 
	\begin{equation}\label{eq:Cmod}
	\frac{1}{\mathcal{R}_{\text{eff}}(a,b)} = \Mod_2(\Gamma(a, b)).
	\end{equation}
	 In the following, we reproduce a proof for this connection and how to calculate $\Mod_2(\Gamma(a, b))$ in symmetric networks using the pseudoinverse of the Laplacian.

Let $\bbF$ be the set of all unit flows $\bbf:E\rightarrow \mathbb{R}$ that satisfy Kirchoff’s node law and pass through a network $G$ from $a$ to $b$. Namely for $v\in V$
\[
(\nabla.\bbf)(v) =  \begin{cases} 
1 & v= a \\
-1 & v=b \\
0 & \text{otherwise}
\end{cases}
\]
corresponds to the injected currents at each node.
The energy of $\bbf$ is
\[
\text{Energy}(\bbf) := \sum_{e\in E} \mathcal{R}(e)\bbf(e)^2
\]
where $\mathcal{R}(e) = \frac{1}{w(e)}$ is the resistance of edge $e$.
A unit current flow $\bbi\in \bbF$ is a unit flow that also satisfies Ohm's law, \textit{i.e.}, there is a function 
$\bbV :V\rightarrow \mathbb{R} $ (called a potential) such that
for every edge $(a,b)$:
\[
\mathcal{R}(a,b)\bbi(a,b) = \bbV(b) - \bbV(a).
\]
%

Let $\bbU:V \rightarrow \mathbb{R}$ be a vertex potential function. We can define a density $\rho_\bbU$  as the gradient of $\bbU$, \textit{i.e.}, for the edge $e=\lbrace v,w\rbrace$
\[
\rho_\bbU(e) := | \bbU_u - \bbU_w | 
\]
Then, $\rho_\bbU$ is admissible  for walks from $a$ to $b$, whenever $\bbU(a) = 0,~\bbU(b) = 1$.

Conversely, if $\rho$ is an admissible density, then we can define a potential $\bbU(x)$ as the infimum of $\ell_\rho(\gamma)$ over all walks from $a$ to $x$. With this definition, $\rho_\bbU=\rho$, see \cite{Albin2014}.

In particular, assuming each edge has a unit resistance,
\[
\text{Energy}(\rho_\bbU) =\sum_{e\in E} \rho_\bbU(e)^2=\sum_{e=\{u,w\}\in E}|\bbU(u)-\bbU(w)|.
\]
Hence,  if we substitute $\bbU$ with $\frac{\bbV}{\mathcal{R}_{\text{eff}}(a,b) }+C$, 
where $\bbV$ is the electric potential when a unit current flow $\bbi\in \bbF$ is passing through the network with source $a$ and sink $b$ and the effective resistance between $a$ and $b$ is $\mathcal{R}_{\text{eff}}$, then, 
\begin{equation}\label{eq_mod_Ceff}
\begin{split}
\Mod_2(a,b) = \min_{\substack{\bbU_a =0\\ \bbU_b = 1}} \rho_\bbU^T\rho_\bbU  = \frac{1}{\mathcal{R}_{\text{eff}}(a,b)}.
\end{split}
\end{equation}

By Kirchhoff's law of current conservation:
\[
\sum_j a_{i,j} (\bbV_{i} -  \mathbbm{V}_j) =  (\nabla.\mathbbm{i})(i)
\]
where $A=[a_{ij}]\in \mathbb{R}^{N\times N}$ is the adjacency matrix of $G$, with $a_{ij}=1$ if and only if ${i,j}\in E$. In matrix form:
\begin{equation}\label{eq_laplacian}
L \bbV  = \mathbbm{I}
\end{equation}
where $L$ is the Laplacian matrix of $G$ and $\mathbbm{I} =\nabla.\mathbbm{i}$. Because $\bbV$ is defined up to an additive and the nullspace of
$L$ is along the constant vector, we ground an arbitrary node $k$ and thus reduce $L$ by removing $k$th row and column denoted by $\prescript{k}{}L$ \cite{van2017pseudoinverse}. Now we can find solve \eqref{eq_laplacian}:
\[
\prescript{k}{}{\bbV} = (\prescript{k}{}{L})^{-1}~ \prescript{k}{}{\mathbbm{I}}.
\]
we denote $(\prescript{k}{}{L})^{-1}$ by $\mathcal{G}$ (reduced conductance matrix)
and obtain effective resistance between nodes $a$ and $b$ is
\begin{equation}
\begin{split}
\mathcal{R}_{\text{eff}}(a,b) &=\prescript{k}{}{\bbV_a} - \prescript{k}{}{\bbV_b}\\
&= \mathcal{G} _{a,a}+\mathcal{G} _{b,b}-2~\mathcal{G} _{a,b} 
\end{split}
\end{equation}
and from \eqref{eq_mod_Ceff}:
\begin{equation}
\Mod_2(a,b) = \left( \mathcal{G}_{a,a}+\mathcal{G}_{b,b}-2\mathcal{G}_{a,b}  \right)^{-1}
\end{equation}

Therefore, using  \eqref{eq:Cmod}, we can rewrite the effective conductance centrality in \eqref{eq:effcondcentr} in the Modulus language
\begin{equation}\label{eq_CFC}
\mathcal{C}_{\text{eff}}(a)=  \sum_{b \in V\setminus a} \Mod_2(\Gamma(a, b)).
\end{equation}

 For the rest of this paper, we consider $p=2$ due to its physical interpretation as effective conductance as well as computational advantages, for instance, in this case \eqref{eq:modulus} is a quadratic program. Moreover, the right-hand side also makes sense on directed networks.

\section{Egocentric effective conductance centrality}
As mentioned above, $\mathcal{C}_{\text{eff}}(a)$ is sociocentric in the sense that it considers all walks from $a$ to an arbitrary node in $G$. However, in practice, it can be prohibitive to scale sociocentric methods to very large networks. Moreover, in real-world situations it is not feasible to have access to the entire network. Rather,  one can at best know local information up to a few neighborhood levels. For instance, when data is anonymized to protect privacy of network entities, identifying the sociocentric picture is impossible, \textit{e.g.}, sexual networks may be limited to the number of contacts of individuals.

An alternative approach is to consider measures that are adapted to egonetworks (also known as neighborhood networks).
An ego network $G^a(r)$ around a node $a$ is constructed by collecting data (nodes and edges) starting from the ego $a$ and searching $G$ out to a predefined order of neighborhood $r\in \lbrace 1,\cdots \epsilon(a)\rbrace$; where $\epsilon(a)$ is the eccentricity of node $a$ or the maximum distance from $a$ to nodes in $G$. 

Egonetworks are often preferred because they support more flexible data collection methods \cite{carrasco2008collecting} and often involve less expensive computation costs. Egocentric measures are more stable \cite{costenbader2003stability} against network sampling and reliable (less sensitivity) with measurement errors \cite{zemljivc2005reliability}.
We concentrate on unweighted (binary) networks to simplify the algebra,
although, all of our methods and discussions can be easily generalized for weighted networks. Thus, we let $d(a,b)$ denote the shortest-path distance between two nodes (smallest number of hops). The neighborhood structure around an ego $a$ is described by the shells of order $k$:
\[
S(a,k):=\lbrace y\in V: d(a,y)=k\rbrace,
\]
and the corresponding families of walks
$\Gamma (v,S(a,k))$, consisting of simple walks that begin at ego $v\in V$ and reach  $S(a,k)$ for the first time.
Modulus allows a quantification of the richness of the family of walks, \textit{i.e.}, a family with many short walks has a larger modulus than a family with fewer and longer walks.  Here we consider {\it shell modulus} $\Mod_2(v,S(a,k))$ which quantifies the capacity of walks emanating from the ego up to the shell $S(a,k)$ \cite{shakeri2016generalized} without having to account the data outside $G^a(k)$.
\begin{theorem}\label{thm_ShellAnal}
	For undirected networks, we can calculate $2$-modulus of $\Gamma \left(v,S(a,k)\right)$ analytically without going through the optimization problem in \eqref{eq:modulus}:
	\begin{equation}\label{eq_anal_ego}
	\Mod_2(a, S(a,r)) = \frac{1+x_s\sum_{j =S_1}^{S_s-1} \frac{1}{x_i}}{x_s} 
	\end{equation}
	where $x_i= \sum_{j =S_1}^{S_s-1}  \mathcal{G}_{ij } $ .
\end{theorem}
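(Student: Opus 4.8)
The plan is to recognize $\Mod_2(\Gamma(a,S(a,r)))$ as a (generalized) effective conductance on the ego network $G^a(r)$ and then to solve the associated Dirichlet problem explicitly in terms of the reduced conductance matrix $\mathcal{G}$.

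\emph{Step 1: from walk-modulus to effective conductance.} I would repeat the argument that produced \eqref{eq_mod_Ceff} with the single sink $b$ replaced by the whole shell. An edge density $\rho$ is admissible for $\Gamma(a,S(a,r))$ precisely when $\rho=\rho_\bbU$ for a vertex potential with $\bbU(a)=0$ and $\bbU\equiv 1$ on $S(a,r)$; and because every walk in this family is cut off the first instant it meets $S(a,r)$, an extremal density may be taken to vanish on every edge outside $G^a(r)$ and on every edge internal to $S(a,r)$. Hence only the subnetwork induced on $\{a\}\cup S(a,1)\cup\cdots\cup S(a,r)$, with $S(a,r)$ shorted into one absorbing terminal, is relevant, and
\[
\Mod_2(a,S(a,r))=\min\bigl\{\rho_\bbU^{T}\rho_\bbU : \bbU(a)=0,\ \bbU|_{S(a,r)}=1\bigr\},
\]
the effective conductance from $a$ to the contracted shell.

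\emph{Step 2: solve the Dirichlet problem via the grounded Laplacian.} By the Dirichlet principle the minimizing $\bbU^{*}$ is harmonic at every interior node of $G^a(r)$ (at $a$ only as a source) and equals its boundary values on $S(a,r)$; summation by parts then turns $\rho_{\bbU^{*}}^{T}\rho_{\bbU^{*}}$ into the net current leaving $a$, i.e. into $\mathcal{R}_{\text{eff}}(a,S(a,r))^{-1}$. Concretely, take the grounded node $k$ to lie in $S(a,r)$; forcing the remaining shell potentials to $0$ and injecting a unit current at $a$ gives $\mathcal{R}_{\text{eff}}(a,S(a,r))=\bbV_a$, which amounts to taking the Schur complement of $\mathcal{G}=(\prescript{k}{}{L})^{-1}$ on the block indexed by $S(a,r)\setminus\{k\}$ (equivalently, a Sherman--Morrison--Woodbury update sending those potentials to $0$). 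Carrying out this elimination, introducing the row sums $x_i=\sum_{j\in S(a,r)\setminus\{k\}}\mathcal{G}_{ij}$ together with the analogous source term $x_a$, and simplifying the resulting rational expression should produce \eqref{eq_anal_ego}.

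\emph{Main obstacle.} The delicate part is Step 2: one must keep careful track of which vertices are grounded, verify that \emph{contracting} the shell (not merely truncating $G$ at radius $r$) is the operation matching the first-passage family $\Gamma(a,S(a,k))$, and then check that the Schur complement of $\mathcal{G}$ over the shell block collapses --- thanks to the special structure of a current flowing outward from the single source $a$ --- into the compact form featuring the sum of reciprocals of the $x_i$. The harmonicity and summation-by-parts identities are routine once \eqref{eq_mod_Ceff} is in hand; fusing them with the matrix reduction and recovering exactly the displayed closed form is where the real computation lies.
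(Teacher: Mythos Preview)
Your Step~1 matches the paper exactly: both reduce $\Mod_2(a,S(a,r))$ to the effective conductance between the ego and the (implicitly contracted) shell on $G^a(r)$, via the same potential-function argument that gave \eqref{eq_mod_Ceff}.

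Step~2 is where you diverge. The paper grounds at the \emph{ego} $a$, not at a shell node. Once $a$'s row and column are deleted, the reduced system $\bbV=\mathcal{G}\,\mathbbm{I}$ with $\mathcal{G}=\bigl(\prescript{a}{}{L^a_{(r)}}\bigr)^{-1}$ has a very clean structure: on the right, the external current $\mathbbm{I}$ is supported \emph{only} on the shell (interior nodes carry zero injected current and the ego has been removed), while on the left every shell entry of $\bbV$ equals the single unknown $c$. Together with the constraint $\sum_{j\in S}\mathbbm{I}_j=-1$, this is precisely what the paper exploits to solve for $c$ in terms of the row sums $x_i=\sum_{j\in S}\mathcal{G}_{ij}$ and to obtain \eqref{eq_anal_ego}. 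The $x_i$ in the theorem statement are row sums of \emph{this} ego-grounded $\mathcal{G}$.

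Your route---ground a shell node $k$, clamp the remaining shell potentials to zero, inject at $a$, and read off $\bbV_a$ via a Schur complement of $(\prescript{k}{}{L})^{-1}$---does compute the same effective resistance, but your reduced inverse now contains a row for $a$ and is missing a shell node, so the row sums you would form are not the $x_i$ of the statement. To land on \eqref{eq_anal_ego} you would still have to translate back to the ego-grounded matrix, which is extra work rather than a shortcut. The simpler fix is to ground at $a$ as the paper does; then no Schur complement or Sherman--Morrison--Woodbury update is needed, because the unknown currents and the equal-potential constraint live on the \emph{same} index block, and the system collapses directly to $\mathcal{R}_{\text{eff}}(a,S(a,r))=-c$.
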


\begin{proof}
	Similar to \eqref{eq:Cmod}, to find $\Mod(a, S(a,r))$ in $G^a(r)$, we solve Kirchhoff's law of currents 
	\begin{equation}\label{eq_LapEgo}
	L^a_{(r)}\bbV = \mathbbm{I}
	\end{equation}
	where $L^v_{(r)}$ is the Laplacian matrix of $G^a(r)$ and $\mathbbm{I}$ is the applied external current vector with values $1$ at ego and for nodes in $S(a,r)$
	\begin{equation}\label{eq_Is}
	\ones^T \mathbbm{I}_S = -1
	\end{equation}
	and zero for other nodes (see Figure \ref{fig_Elec_Shell}).
	\begin{figure}
		\includegraphics[clip,width=.8\columnwidth]{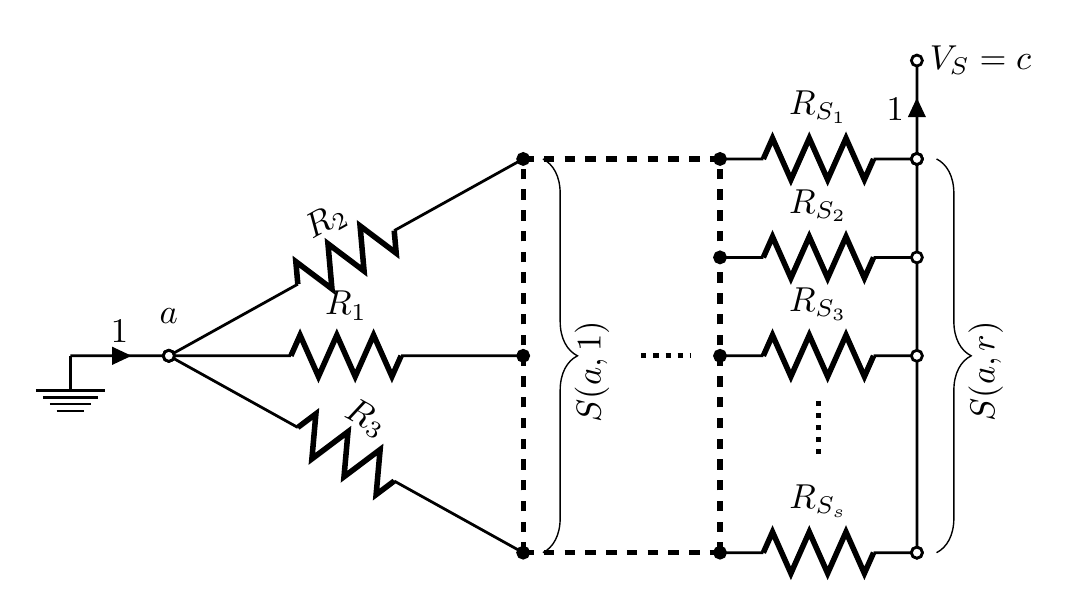}%
		\caption{Interpreting $\Mod_2(a,S(a,r))$ as finding effective conductance between grounded node $a$ and nodes with the same potential $c$ in $S(a,r)$ in an electrical network. Solution follows from the corresponding Laplacian system.}\label{fig_Elec_Shell}
	\end{figure}
	Nodes in $S(a,r)$ have similar electric potential $c$.
	
	The above problem has a unique harmonic solution for $\bbV$ up to a constant, we ground the potential at ego, \textit{i.e.}, $\bbV_a = 0$ and find other nodes potentials by
	\[
	\bbV = \mathcal{G}~\mathbbm{I}
	\]
	where $\mathcal{G} = \left(\prescript{a}{}{L^v_{(r)}}\right)^{-1}$ is the reduced conductance matrix. Combining \eqref{eq_LapEgo} and \eqref{eq_Is}
	\begin{align}\label{eq_matrixShell}
	\begin{bmatrix}
	V_2\\
	\vdots\\
	c\\
	c\\
	\vdots\\
	c
	\end{bmatrix} =\mathcal{G}
	\begin{bmatrix}
	0\\
	\vdots\\
	\mathbbm{I}_{S_1}\\
	\mathbbm{I}_{S_2}\\
	\vdots\\
	\mathbbm{I}_{S_s}
	\end{bmatrix}
	\rightarrow
	\begin{bmatrix}
	V_2\\
	\vdots\\
	\frac{c}{\mathbbm{I}_{S_1}}\\
	\frac{c}{\mathbbm{I}_{S_2}}\\
	\vdots\\
	\frac{c}{-1-\sum_{j =S_1}^{S_s-1} \mathbbm{I}_{j} }
	\end{bmatrix} =\mathcal{G}
	\begin{bmatrix}
	0\\
	\vdots\\
	1\\
	1\\
	\vdots\\
	1
	\end{bmatrix}
	=\mathbf{x}
	\end{align}
	where $x_i= \sum_{j =S_1}^{S_s-1}  \mathcal{G}_{ij } $.
	If $|S|=s$ and for $i\in \{S_1, \cdots,S_{s-1}\}$
	\[
	\mathbbm{I}_i = \frac{c}{x_i}
	\]
	From \eqref{eq_matrixShell}:
	\[
	\frac{c}{-1-c\sum_{j =S_1}^{S_s-1} \frac{1}{x_i}  }  =x_{S_s}
	\]
	\[
	c = \frac{-x_s}{1+x_s\sum_{j =S_1}^{S_s-1}\frac{1}{x_i}}
	\]
	and the effective resistance between $a$ and $S(a,r)$:
	\[
	\mathcal{R}_{a, S(a,r)} = \bbV_v - c = \frac{x_s}{1+x_s\sum_{j =S_1}^{S_s-1} \frac{1}{x_i}}
	\] 
	and since $\bbV_a = 0$ (grounded):
	\[\Mod_2(a, S(a,r)) = \frac{1+x_s\sum_{j =S_1}^{S_s-1} \frac{1}{x_i}}{x_s}.
	\]
	
\end{proof}

The convex optimization problem in \eqref{eq:modulus} involves a quadratic minimization. In the undirected case, computing the pseudoinverse of the Laplacian in \eqref{eq_anal_ego} involves solving a Laplacian system. In both cases, algorithms and technique are still improving and advancing. However, graphs with more than a million edges may become untractable.

We propose the following egocentric version of $\mathcal{C}_{\text{eff}}(a)$ using shell modulus:
\begin{equation}\label{eq_ShellCFC}
\mathcal{C}_\text{shell}(a, r) := \sum_{k=1}^{r} \Mod_2(v, S(a,k)) 
\end{equation}
This {\it shell modulus centrality}  follows the same logic as \eqref{eq_CFC} but only requires the egocentric network data. For undirected networks, we can analytically compute \eqref{eq_ShellCFC} using Theorem \ref{thm_ShellAnal}.

In Figure \ref{fig_ShellC}, centralities of nodes in three small networks are computed, by considering $\mathcal{C}_{\text{shell}}(v,r)$ with $r={\rm diam}(G)$. 
\begin{figure}
	\subfloat[]{%
		\includegraphics[clip,width=.35\columnwidth]{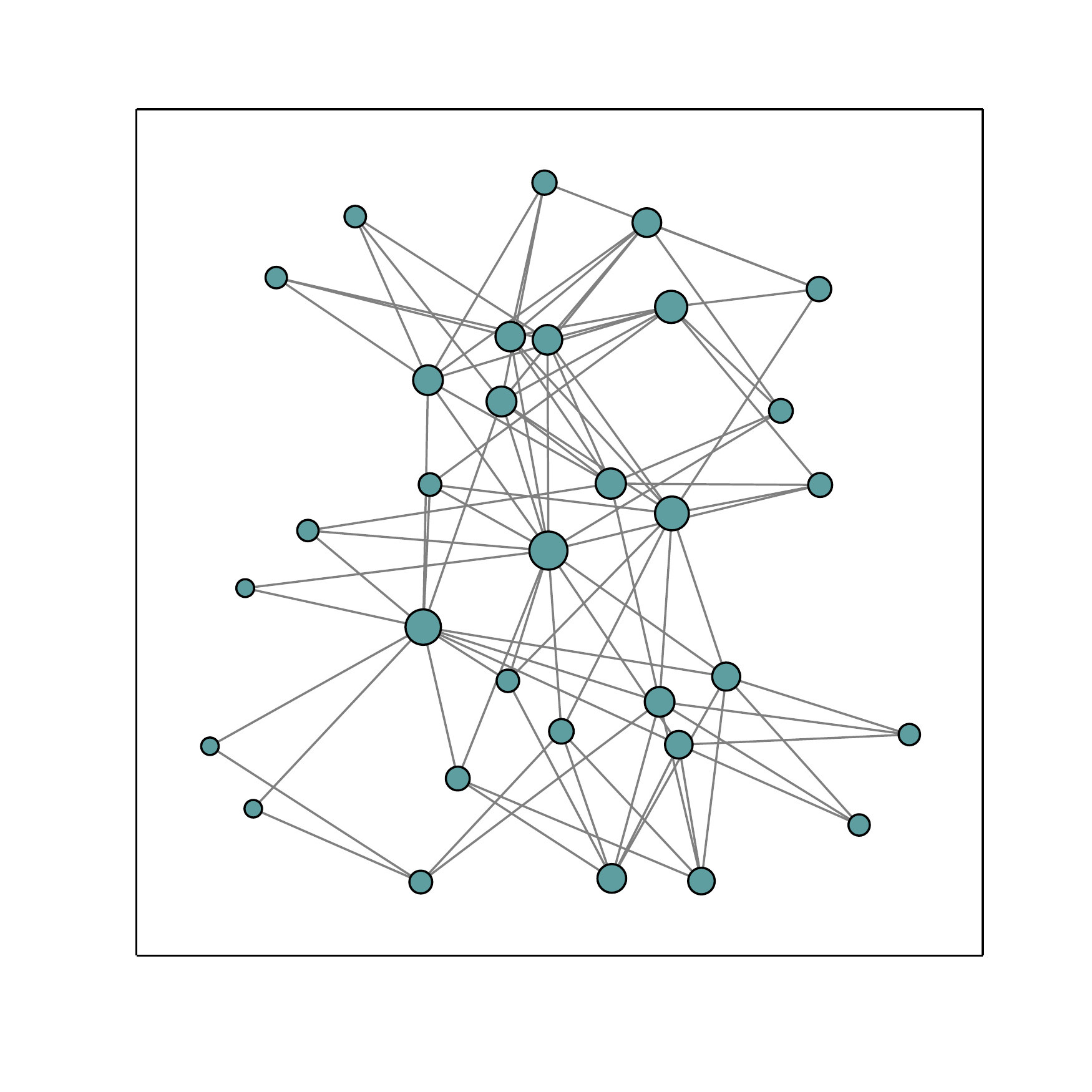}%
z	}
	\subfloat[]{%
		\includegraphics[clip,width=.35\columnwidth]{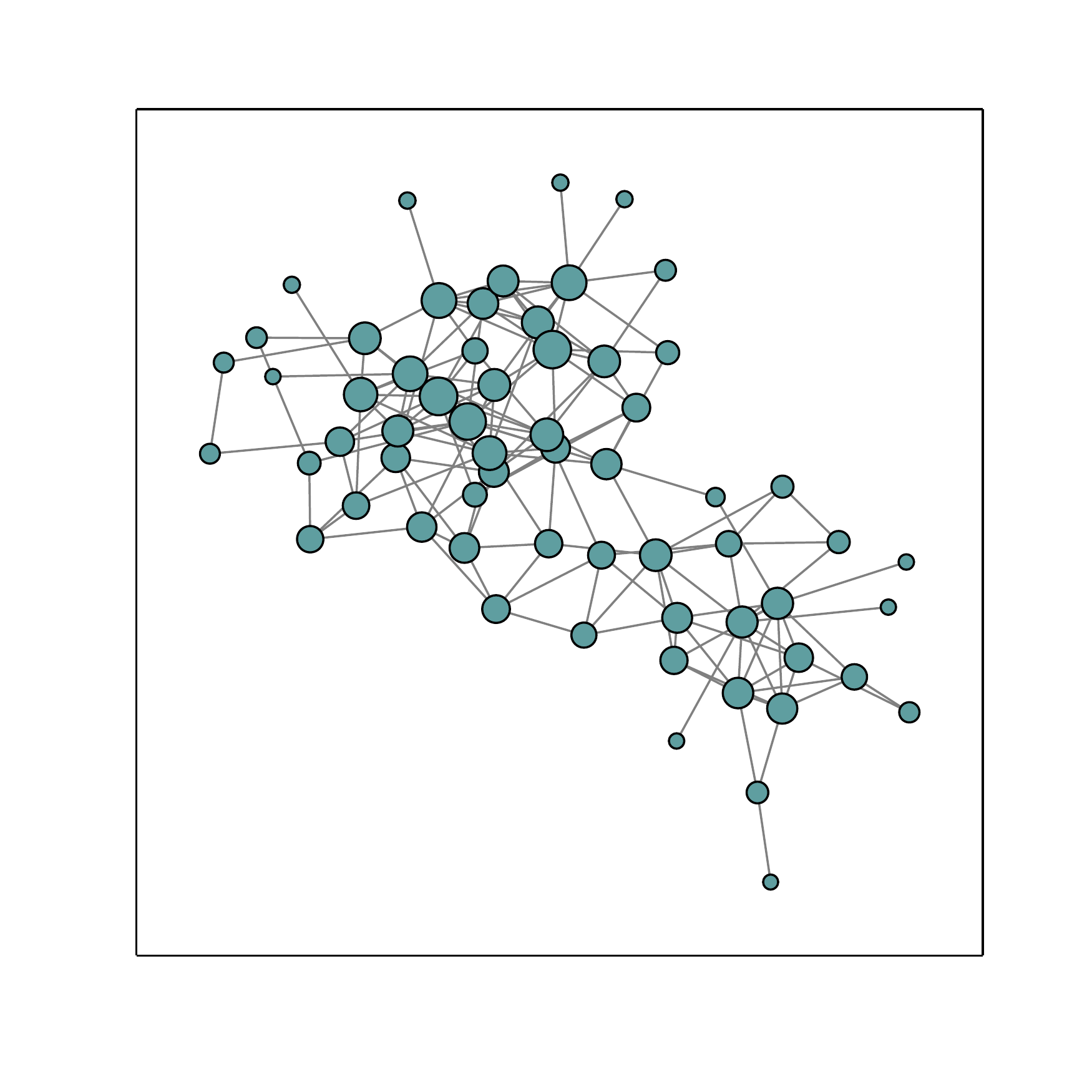}%
	}
	\subfloat[]{%
		\includegraphics[clip,width=.35\columnwidth]{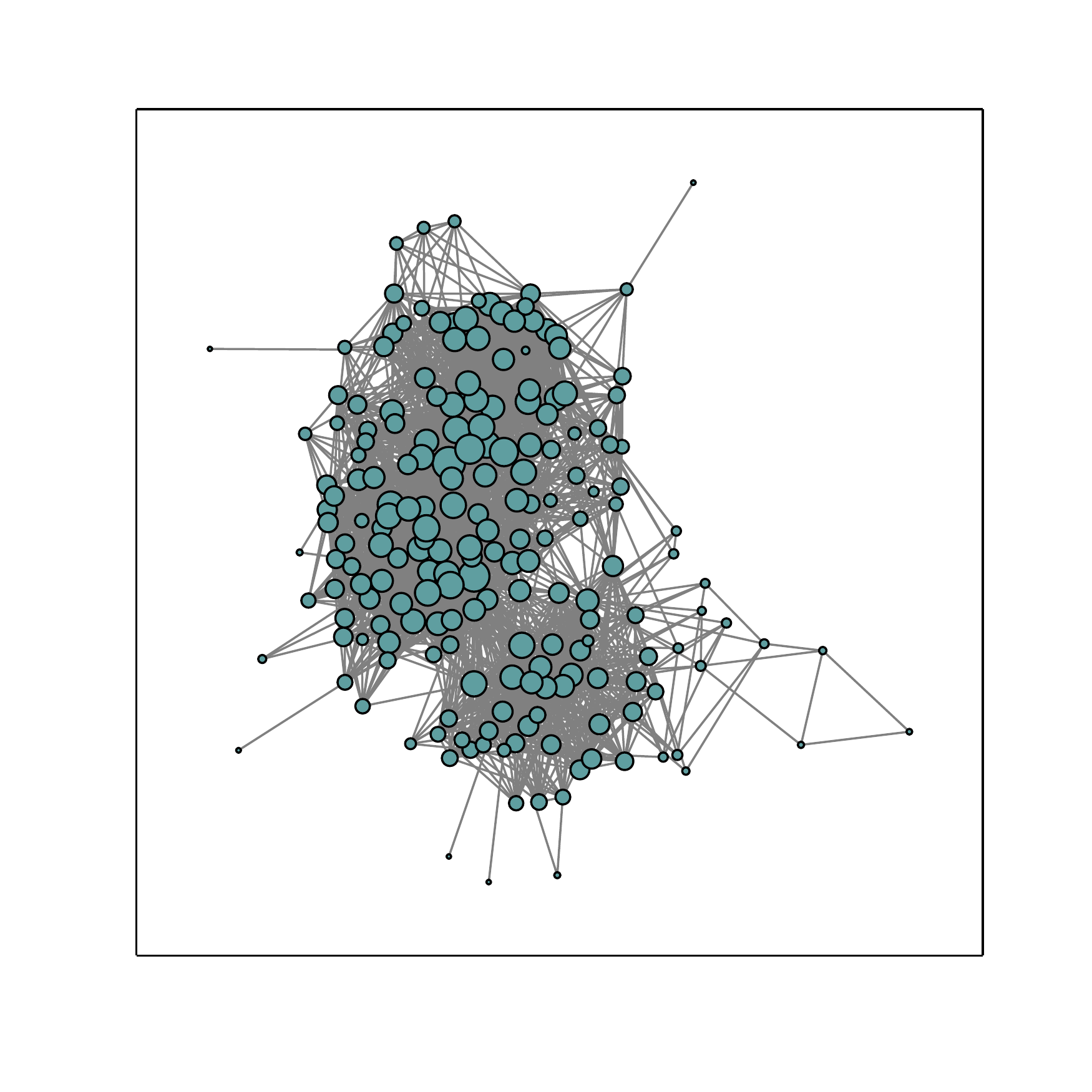}%
	}
	\caption{(a) Davis southern women social network \cite{davis2009deep}. (b) Social network of bottlenose dolphins \cite{lusseau2003bottlenose}. (c) Jazz musicians network  \cite{Jazz}. Node sizes are scaled with the egocentric version of effective conductance centrality computed by \eqref{eq_ShellCFC}. The ranking is unchanged when using  the sociocentric version \eqref{eq:effcondcentr}}\label{fig_ShellC}
\end{figure}
In Figure \ref{fig_ShellC}(a-c), node sizes are scaled with their $C_\text{shell}(v, r)$ values and the computed centralities give, as expected, the same ranking as effective conductance.

In general, $\eqref{eq_CFC}$ requires $|V |$ modulus computations in all of $G$, while $\eqref{eq_ShellCFC}$ only needs $r$ modulus computations in $G^a(r)$. 

Shell modulus centrality  can handle fairly large networks, \textit{e.g.} 100,000 edges. The algorithm used here computes $\eqref{eq:modulus}$ using an active set dual method quadratic programming \cite{goldfarb1983numerically}. It's theoretically enough to consider at most $|E|$ active constraints \cite{albin2016minimal}. Violated (active) constraints are found using Dijkstra's algorithm and the constraint matrix is updated using the  Cholesky decomposition. 

 In the following, we focus on approximating \eqref{eq_ShellCFC} efficiently, while incorporating most of the benefits of shell modulus in a scalable framework.

\subsection{Bounding from above}\label{sec_upper}

First, we provide an upper bound that is known in the complex analysis literature as {\it Ahlfors estimate} \cite[Chapter 4, Equations 4-6]{AhlforsBook}, and in the context of electrical networks goes under the name of Nash-Williams inequality \cite{lyons2016probability}.
Given an egonetwork $G^a(r)$, we consider  the set of edges that connect a shell $S(a,k-1)$ to the next shell $S(a, k)$, for $k\in \lbrace 1,\cdots,r\rbrace$: 
\[
E(a,k):= \left\lbrace e=\{x,y\}\in E|\ x\in S(a,k-1),\ y\in S(a,k)\right\rbrace.
\]
We call the sets $E(a,k)$ {\it shell connecting sets}.
Since  $\Mod_2(v,S(a,r))$ is a minimization problem (\ref{eq:modulus}), we get an upper bound simply by choosing an appropriate admissible density $\bar{\rho}$. Here, we pick the best admissible density that is constant for all edges in each shell connecting set. After computing the minimized energy of this density, we obtain the following upperbound:
\begin{theorem}[Ahlfors upperbound]
	Shell modulus is bounded by the following inequality
\begin{equation}\label{eq:upperbound}
\Mod_2(a,S(a,r))\le  \frac{1}{\sum_{k=1}^{r}\frac{1}{|E(a,k)|}}.
\end{equation}
\end{theorem}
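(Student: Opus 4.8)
The plan is to use the fact that $\Mod_2$ is defined by a minimization, so that the energy of \emph{any} admissible density is automatically an upper bound for the modulus; the theorem will then follow by evaluating the energy of one well-chosen density and optimizing within a one-parameter-per-shell family. Concretely, I would restrict attention to densities $\bar\rho$ that are constant on each shell connecting set, $\bar\rho(e)=t_k\ge 0$ for $e\in E(a,k)$ with $k\in\{1,\dots,r\}$, and $\bar\rho(e)=0$ on every other edge of $G^a(r)$ (intra-shell edges and edges incident only to $S(a,r)$). Note $|E(a,k)|\ge 1$ for each $k\le r$, since a node in $S(a,k)$ must have a neighbor in $S(a,k-1)$, so nothing below is ill-defined.

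The key combinatorial step is to verify admissibility: I claim every walk $\gamma\in\Gamma(a,S(a,r))$ satisfies $\ell_{\bar\rho}(\gamma)\ge\sum_{k=1}^r t_k$. This is a first-passage (level-crossing) argument. Because the network is unweighted, any edge $\{x,y\}$ has $|d(a,x)-d(a,y)|\le 1$. Writing $\gamma=(a=v_0,v_1,\dots,v_m)$ with $v_m\in S(a,r)$, for each $k\in\{1,\dots,r\}$ let $j_k$ be the \emph{first} index with $d(a,v_{j_k})=k$; then $v_{j_k-1}$ has distance $<k$ and $\ge k-1$, hence $v_{j_k-1}\in S(a,k-1)$ and $v_{j_k}\in S(a,k)$, so $\{v_{j_k-1},v_{j_k}\}\in E(a,k)$, and these $r$ edges are pairwise distinct (they join distinct pairs of shells). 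Summing $\bar\rho$ over $\gamma$ thus gives at least $\sum_{k=1}^r t_k$, so $\bar\rho$ is admissible whenever $\sum_{k=1}^r t_k\ge 1$.

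It remains to minimize the energy over this family. With unit conductances, $\text{Energy}_2(\bar\rho)=\sum_{e\in E}\bar\rho(e)^2=\sum_{k=1}^r |E(a,k)|\,t_k^2$, and by Cauchy--Schwarz applied to the vectors $\bigl(|E(a,k)|^{-1/2}\bigr)_k$ and $\bigl(|E(a,k)|^{1/2}t_k\bigr)_k$,
\[
1\le\Bigl(\textstyle\sum_{k=1}^r t_k\Bigr)^2\le\Bigl(\textstyle\sum_{k=1}^r\tfrac{1}{|E(a,k)|}\Bigr)\Bigl(\textstyle\sum_{k=1}^r |E(a,k)|\,t_k^2\Bigr),
\]
so $\text{Energy}_2(\bar\rho)\ge\bigl(\sum_{k=1}^r 1/|E(a,k)|\bigr)^{-1}$, with equality for $t_k=\bigl(|E(a,k)|\sum_{j=1}^r 1/|E(a,j)|\bigr)^{-1}$, which satisfies $\sum_k t_k=1$ and is therefore admissible. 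Feeding this optimal $\bar\rho$ into $\Mod_2(a,S(a,r))\le\text{Energy}_2(\bar\rho)$ gives \eqref{eq:upperbound}.

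The only non-routine point is the admissibility verification: one must be certain that a walk reaching $S(a,r)$ genuinely accrues length in each of the $r$ shell connecting sets without any double-counting. The first-passage indices $j_k$ make this precise, and the fact that an edge changes the distance-to-$a$ by at most one is exactly what makes the argument go through on unweighted graphs; the weighted or directed case would require replacing the shells by an appropriate filtration of the vertex set. Everything after that is a one-line Cauchy--Schwarz optimization.
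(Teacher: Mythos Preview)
Your proof is correct and follows essentially the same approach as the paper: restrict to densities that are constant on each shell connecting set $E(a,k)$, then minimize the resulting quadratic $\sum_k|E(a,k)|\,t_k^2$ under the constraint $\sum_k t_k=1$ via Cauchy--Schwarz. The only difference is that you supply the first-passage (level-crossing) argument showing that $\sum_k t_k\ge 1$ is indeed the admissibility condition, whereas the paper simply writes down that constraint without justification; in that sense your version is slightly more complete.
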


\begin{proof}
	Since (\ref{eq:modulus}) is a minimization problem, an upper bound for the shell modulus $\Mod_2(a,S(a,r))$ can be found by picking an appropriate density  $\bar{\rho}$. Here we will restrict ourselves to densities that are constant on the shell connecting sets $E(a,k)$. Let
	\[
	\bar{\rho}(e):=x_k\qquad\text{if $e\in E(a,k)$}.
	\]  
	Then we solve the following minimization problem:
	\begin{equation}\label{eq_min_xk}
	\begin{aligned}
	& \underset{x}{\text{minimize}}
	& & \sum_{k=1}^r\theta_k x_k^2 \\
	& \text{subject to}
	& & \sum_{k=1}^r x_k =1
	\end{aligned}
	\end{equation}
	where $\theta_k:=|E(a,k)|$. By Cauchy-Schwarz inequality
	\[
	1\le 
	\left( \sum_{k=1}^r x_k\right)^2=\left( \sum_{k=1}^r \frac{1}{\sqrt{\theta_k}}\sqrt{\theta_k}x_k\right)^2 \le \sum_{k=1}^r \frac{1}{\theta_k}\sum_{k=1}^r\theta_k x_k^2
	\]
	and thus the minimum in \ref{eq_min_xk}  is greater than $\left(\sum_{k=1}^r \frac{1}{\theta_k}\right)^{-1}$. However,  when $x$ takes the form:
	\[ x_k =\frac{C}{\theta_k},
	\]
	the minimum is achieved for
	\[
	C = \frac{1}{\sum_{k=1}^{r} \frac{1}{\theta_k}}.
	\]
	
\end{proof}

\subsection{Bounding from below}\label{sec_lower}
To provide a lower bound for shell modulus, we focus on
geodesic paths (shortest walks). These are usually the most important pathways of influence between the ego and other nodes. Classical measures of centrality, such as closeness centrality and betweenness centrality, are  based uniquely on shortest paths \cite{freeman1978centrality}.  

When collecting the egocentric data around an ego $a$, one can take care to avoid forming cycles, and the  resulting egonetwork becomes a tree. So assuming $T^a(r)$ is a tree contained in $G^a(r)$, we can use $\Gamma$-monotonicity to get a lower bound, i.e., if $\Gamma'\subset\Gamma$, then $\Mod_2\left(\Gamma'\right)\leq \Mod_2\left(\Gamma\right)$ \cite{shakeri2016generalized}.

 Moreover, if we write $\Mod_2(T^a(r))$ for the shell modulus 
of all walks in $T^a(r)$ starting at the root $a$ and reaching depth-level $r$, 
this can be analytically calculated.
\begin{theorem}
$T^a(r)$ can be calculated using the following
 recursive formula.
\begin{equation}\label{eq:recursion}
\Mod_2(T^a(k)) = \sum_{c\in C(a)}\frac{\Mod_2(T_{c,k-1})}{1+\Mod_2(T_{c,k-1})}
\end{equation}
where $C(a) := \{c_1 , c_2 , . . . , c_m \} \subseteq V$ are the children of $a$ and $T_{c,k-1}$ represents the subtree 
formed from $T_a$ by keeping only $c$ and its descendants.
\end{theorem}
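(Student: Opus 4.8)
The plan is to reduce the shell modulus on a tree to two elementary combination rules --- a \emph{parallel rule} across the children of the root and a \emph{series rule} along the edge joining the root to a child --- and then apply them once. The first step is to notice that, because $T^a(k)$ is a tree, the family defining $\Mod_2(T^a(k))$ collapses to geodesics: there is a unique simple walk between any two vertices, so the walks from $a$ that first reach depth $k$ are precisely the geodesics $a\to y$ with $y\in S(a,k)$, and deleting any edge of such a geodesic disconnects $a$ from $y$; hence every walk from $a$ reaching depth $k$ traverses each edge of the geodesic ``through'' it, so it has $\rho$-length at least that of the geodesic for any $\rho\ge 0$. Consequently the set $\Adm$ of admissible densities is the same for ``all walks reaching depth $k$'' and for the geodesic family $\Gamma_k$, and it suffices to compute $\Mod_2(\Gamma_k)$.

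Next I would split $\Gamma_k$ according to its first edge. Every geodesic in $\Gamma_k$ starts with a unique edge $e_c=\{a,c\}$, $c\in C(a)$, and then stays inside the subtree $T_c$ until it reaches depth $k-1$; grouping gives a partition $\Gamma_k=\bigsqcup_{c\in C(a)}\Gamma_c$. The edge sets $E_c:=\{e_c\}\cup E(T_c)$ are pairwise disjoint and cover $E(T^a(k))$, and every walk of $\Gamma_c$ uses only edges of $E_c$; so restricting the extremal density of $\Gamma_k$ to each $E_c$ yields a density admissible for $\Gamma_c$ whose energies sum to $\mathrm{Energy}(\rho^*)$, giving $\sum_{c}\Mod_2(\Gamma_c)\le\Mod_2(\Gamma_k)$, while countable subadditivity of modulus gives the reverse inequality. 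This is the parallel rule $\Mod_2(\Gamma_k)=\sum_{c\in C(a)}\Mod_2(\Gamma_c)$.

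For the series rule I would fix a child $c$, set $M:=\Mod_2(T_{c,k-1})=\Mod_2(\Gamma(c,S(c,k-1)))$ inside $T_c$, and note that each walk of $\Gamma_c$ is $e_c$ followed by a walk $\gamma'$ of $\Gamma(c,S(c,k-1))$, with $e_c\notin E(T_c)$. Writing an admissible density as $\rho(e_c)=\alpha$ and $\rho'=\rho|_{E(T_c)}$, admissibility means $\alpha+\ell_{\rho'}(\gamma')\ge 1$ for all $\gamma'$; if $\alpha<1$ then $\rho'/(1-\alpha)$ is admissible for $\Gamma(c,S(c,k-1))$, so $\mathrm{Energy}(\rho)\ge\alpha^2+(1-\alpha)^2M$, and minimizing this scalar quantity over $0\le\alpha\le 1$ (the case $\alpha\ge1$ being larger still) gives $M/(1+M)$, attained at $\alpha=M/(1+M)$ with $\rho'=\rho^*/(1+M)$. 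Hence $\Mod_2(\Gamma_c)=M/(1+M)=\Mod_2(T_{c,k-1})/(1+\Mod_2(T_{c,k-1}))$, and substituting into the parallel rule is exactly \eqref{eq:recursion}. The base case $k=1$ is the degenerate instance where each $T_{c,0}$ is a single vertex and $\Gamma_c$ reduces to the edge $e_c$, of modulus $1$.

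I expect the series rule to be the only genuine obstacle: one has to check that the rescaled restriction $\rho'/(1-\alpha)$ really is admissible for the downstream family, handle the degenerate case $\rho(e_c)\ge1$ separately, and confirm that the scalar minimum of $\alpha^2+(1-\alpha)^2M$ is attained inside $[0,1]$ so that the constraint is inactive. Everything else --- the geodesic reduction, the edge-disjoint decomposition, and the additivity of $2$-modulus over edge-disjoint families --- follows from $\Gamma$-monotonicity and the subadditivity already recorded above.
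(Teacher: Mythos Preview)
Your proposal is correct and follows essentially the same route as the paper: split the modulus of $T^a(k)$ via the parallel rule over the children of $a$ (edge-disjoint subfamilies), then apply a series rule along the edge $\{a,c\}$ by writing $\rho(a,c)=\alpha$, using the homogeneity $\Mod_2'(\,\cdot\,;\ell)=\ell^2\Mod_2(\,\cdot\,)$, and minimizing $\alpha^2+(1-\alpha)^2\Mod_2(T_{c,k-1})$ over $\alpha\in[0,1]$. Your treatment is in fact a little more explicit than the paper's on the geodesic reduction, the two inequalities for the parallel rule, and the degenerate case $\alpha\ge 1$; the paper records the base case as $\Mod_2(T_{l,0})=\infty$ so that $M/(1+M)=1$, which is equivalent to your direct observation that $\Mod_2(\Gamma_c)=1$ when $c$ is a leaf.
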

To prove Equation \eqref{eq:recursion}, let $T_a$ be a rooted shortest tree at $a$ with vertex set $V$, and edge set $E$. Every density $\rho : E \rightarrow [0, \infty)$
gives a weighted distance on the tree defined by
\[
d_{\rho} (x, y) = \sum _{e\in \gamma(x,y)}\rho(e)
\]
We define the set of admissible densities Adm$(T^a_{k}  )$, for walks starting from root $a$ (ego) to leaves at depth $k$, denoted by $l_k$
\[
\text{Adm}(T^a_{k} ) := \lbrace\rho : E \rightarrow [0, \infty) : \ell_\rho(a, l_k)\ge 1 \rbrace.
\]
with modulus
\[
\Mod_2 (T^a_{k}  ) :=\inf_{\rho\in \text{Adm}(T_{a,k})} \sum_{e\in E} \rho(e)^2
\]

\begin{figure}
	\includegraphics[clip,width=.8\columnwidth]{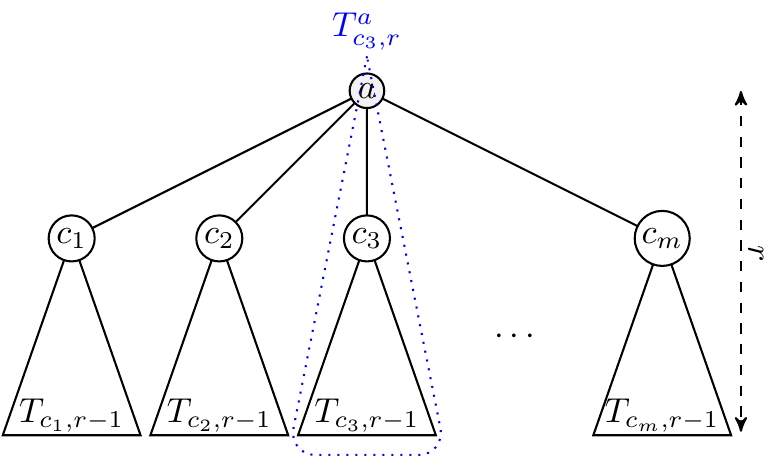}%
	\caption{The tree $T^a_{r}$ and its subtrees. Each child $c_i$ of $a$ can induce two subtrees--if it has descendants until depth $r-1$. $T_{c_i,r}^a$
		(outlined
		with a dashed line for $i = 3$ in the figure) is the subtree rooted at $a$ formed by removing all other
		children and their descendants from $T^a_r$. $T_{c_i,r-1}$
		is the subtree rooted at $c_i$ formed by removing $a$ from
		$T_{c_i,r}^a$
		.}\label{fig_Tree}
\end{figure}

Assuming $a$ has at least one child, let $C(a) := \{c_1 , c_2 , . . .  \} \subseteq V$ be the children. Each child
$c$ induces two rooted subtrees (Figure \ref{fig_Tree}). Let $T_c^a$ represent the subtree (still rooted at $a$)
formed from $T_v$ by pruning all of $a$’s children other than $c$ along with their descendants, and let $T_c$
represent the subtree (now rooted at $c$) formed by removing $a$ from $T_c^a$.

The following lemma is
an immediate consequence of the \emph{parallel rule} of modulus: Given two families $\Gamma_1$ and $\Gamma_2$, suppose that $e\in E$ and $\gamma_1\in \Gamma_1$ and $\gamma_2\in \Gamma_2$ we have $\mathcal{N}(\gamma_1,e)\mathcal{N}(\gamma_2,e)=0$, then
$\Mod_2\left(\Gamma_1\cup\Gamma_2\right) = \Mod_2\left(\Gamma_1\right)+\Mod_2\left(\Gamma_2\right)$.

\begin{lemma}\label{lem_TC1}
	The modulus of $T^a_{k} $ is related to the moduli of the $T_{c_i, k}^a$ as follows.
	\[
	\Mod_2 (T^a_{k}  ) = \sum_{i=1}^{m}\Mod_2 (T_{c_i,k}^a ).
	\]
\end{lemma}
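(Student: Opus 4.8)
The plan is to realize the walk family underlying $\Mod_2(T^a_k)$ — call it $\Gamma(T^a_k)$, the walks from the root $a$ that reach depth $k$ in $T_a$ — as a union of the $m$ subfamilies $\Gamma(T^a_{c_i,k})$ of those walks that do so through the child $c_i$, and then to apply the parallel rule of modulus stated just above the lemma. First I would note that, since $T_a$ is a tree, any walk starting at $a$ and reaching depth $k$ must enter exactly one child $c_i$ and thereafter stay among the descendants of $c_i$; hence
\[
\Gamma(T^a_k)=\bigcup_{i=1}^{m}\Gamma(T^a_{c_i,k}),
\]
where a child with no descendant at depth $k$ contributes the empty family, which has modulus $0$ and may be ignored.

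Next I would verify the hypothesis of the parallel rule for these subfamilies. For $i\ne j$ the subtrees $T^a_{c_i,k}$ and $T^a_{c_j,k}$ meet only in the vertex $a$: the first consists of the edge $\{a,c_i\}$ together with the edges lying below $c_i$, the second of $\{a,c_j\}$ together with the edges below $c_j$, and these edge sets are disjoint because $T_a$ is acyclic. Therefore $\mathcal{N}(\gamma_i,e)\,\mathcal{N}(\gamma_j,e)=0$ for every $e\in E$, every $\gamma_i\in\Gamma(T^a_{c_i,k})$ and every $\gamma_j\in\Gamma(T^a_{c_j,k})$. The parallel rule is phrased for two families, so I would finish by induction on $m$: grouping $\Gamma(T^a_{c_1,k})$ against $\bigcup_{i=2}^{m}\Gamma(T^a_{c_i,k})$ (whose edges avoid $T^a_{c_1,k}$ entirely), the rule gives $\Mod_2(T^a_k)=\Mod_2(T^a_{c_1,k})+\Mod_2\!\big(\bigcup_{i=2}^{m}\Gamma(T^a_{c_i,k})\big)$, and the inductive hypothesis reduces the second term to $\sum_{i=2}^{m}\Mod_2(T^a_{c_i,k})$.

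The only step needing real care is the decomposition in the first paragraph: one must be sure it is exact — no walk reaching depth $k$ is lost or double counted, which is clear in a tree since the path from $a$ to any depth-$k$ leaf is unique — and one should observe that although $\Mod_2(T^a_k)$ is an optimization over densities on the full edge set $E$ of $T_a$, the edges belonging to no $T^a_{c_i,k}$ carry no length constraint and receive density $0$ in the extremal $\rho$, so they contribute nothing to the energy and the split above is genuinely a split of the whole minimization problem. After this, the lemma is just the parallel rule applied $m-1$ times.
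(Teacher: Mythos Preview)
Your argument is correct and is exactly the approach the paper takes: the paper states the parallel rule immediately before the lemma and calls the lemma an immediate consequence of it, which is precisely the decomposition-plus-induction you wrote out. The only difference is that you supplied the details the paper left implicit.
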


By Lemma \ref{lem_TC1}, we may restrict ourselves to the case that $a$ has a single child $c$. In this case, the \emph{serial rule} for modulus allows us to reduce the problem to finding the modulus of $T_{c,k-1}$. This is explained in the following lemma.
\begin{lemma}\label{lem_TC2}
	The modulus of $T_{c,k}^a$ is related to the modulus of $T_{c,k-1}$ as follows.
	\begin{equation}\label{eq_0Recursion}
	\Mod_2(T_{c,k}^a) = \frac{\Mod_2(T_{c,k-1})}{1+\Mod_2(T_{c,k-1})}
	\end{equation}
	
\end{lemma}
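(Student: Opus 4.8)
\textbf{Proof proposal for Lemma \ref{lem_TC2}.}

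The plan is to exhibit the unique extremal density for $T_{c,k}^a$ explicitly in terms of the extremal density for $T_{c,k-1}$, and to verify it is both admissible and energy-optimal by a first-order (Lagrange/variational) argument. Write $e_0 = \{a,c\}$ for the single edge joining the root $a$ to its only child $c$, and write $E_{c}$ for the edge set of the subtree $T_{c,k-1}$, so that the edge set of $T_{c,k}^a$ is $\{e_0\}\cup E_c$. Every walk in $\Gamma(a, l_k)$ in $T_{c,k}^a$ traverses $e_0$ exactly once and then continues as a walk in $\Gamma(c, l_{k-1})$ inside $T_{c,k-1}$; conversely every such concatenation is a walk reaching depth $k$. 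Thus the admissibility constraint $\ell_\rho(a,l_k)\ge 1$ is exactly $\rho(e_0) + \ell_{\rho|_{E_c}}(c, l_{k-1}) \ge 1$. This is the serial decomposition: the family for $T_{c,k}^a$ is the ``series'' of the single-edge family $\{e_0\}$ and the family for $T_{c,k-1}$.

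First I would record the general serial rule in the form I need: if a family $\Gamma$ decomposes as a series of $\Gamma_1$ (supported on edge set $E_1$) and $\Gamma_2$ (supported on a disjoint edge set $E_2$), meaning every $\gamma\in\Gamma$ splits as $\gamma_1\cup\gamma_2$ with $\gamma_i\in\Gamma_i$ and $\ell_\rho(\gamma) = \ell_\rho(\gamma_1)+\ell_\rho(\gamma_2)$, then
\[
\frac{1}{\Mod_2(\Gamma)} = \frac{1}{\Mod_2(\Gamma_1)} + \frac{1}{\Mod_2(\Gamma_2)}.
\]
The single-edge family $\{e_0\}$ trivially has $\Mod_2 = 1$ (put density $1$ on $e_0$). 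Plugging $\Mod_2(\Gamma_1)=1$ and $\Mod_2(\Gamma_2)=\Mod_2(T_{c,k-1})$ into the reciprocal-addition formula gives
\[
\frac{1}{\Mod_2(T_{c,k}^a)} = 1 + \frac{1}{\Mod_2(T_{c,k-1})} = \frac{1+\Mod_2(T_{c,k-1})}{\Mod_2(T_{c,k-1})},
\]
which rearranges to \eqref{eq_0Recursion}. So the real content is proving the serial rule itself.

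For the serial rule I would argue by constructing the optimizer. Let $\rho_1^*$ be extremal for $\Gamma_1$ with $\Mod_2(\Gamma_1)=\mathcal{E}_1$ and $\rho_2^*$ extremal for $\Gamma_2$ with $\Mod_2(\Gamma_2)=\mathcal{E}_2$; after normalization we may assume $\ell_{\rho_i^*}(\gamma)\ge 1$ for all $\gamma\in\Gamma_i$. For scalars $\lambda_1,\lambda_2\ge 0$ with $\lambda_1+\lambda_2 = 1$, the density $\rho = \lambda_1\rho_1^* \oplus \lambda_2\rho_2^*$ (meaning $\lambda_i\rho_i^*$ on $E_i$) satisfies $\ell_\rho(\gamma) = \lambda_1\ell_{\rho_1^*}(\gamma_1) + \lambda_2\ell_{\rho_2^*}(\gamma_2) \ge \lambda_1+\lambda_2 = 1$, so it is admissible for $\Gamma$, with energy $\lambda_1^2\mathcal{E}_1 + \lambda_2^2\mathcal{E}_2$. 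Minimizing $\lambda_1^2\mathcal{E}_1+\lambda_2^2\mathcal{E}_2$ over the simplex (a one-line Lagrange computation) gives $\lambda_i \propto 1/\mathcal{E}_i$ and minimum value $(1/\mathcal{E}_1 + 1/\mathcal{E}_2)^{-1}$, hence $\Mod_2(\Gamma) \le (1/\mathcal{E}_1+1/\mathcal{E}_2)^{-1}$. For the reverse inequality, take the extremal $\rho^*$ for $\Gamma$ and restrict it to $E_1$ and $E_2$; by admissibility of $\rho^*$ for $\Gamma$ and the series structure, $\inf_{\gamma_1}\ell_{\rho^*}(\gamma_1) + \inf_{\gamma_2}\ell_{\rho^*}(\gamma_2) \ge 1$, so if we set $m_i := \inf_{\gamma_i\in\Gamma_i}\ell_{\rho^*}(\gamma_i)$ then $\rho^*|_{E_i}/m_i$ is admissible for $\Gamma_i$, giving $\mathcal{E}_i \le \|\rho^*|_{E_i}\|^2/m_i^2$; combined with $m_1+m_2\ge 1$ and Cauchy--Schwarz one gets $\|\rho^*\|^2 = \|\rho^*|_{E_1}\|^2 + \|\rho^*|_{E_2}\|^2 \ge (1/\mathcal{E}_1 + 1/\mathcal{E}_2)^{-1}$.

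The main obstacle I anticipate is the reverse inequality in the serial rule: one must be careful that restricting the global extremal density to the two edge-blocks and rescaling by the per-block minimal lengths really does produce admissible densities for the sub-families, and that the algebra with $m_1+m_2\ge 1$ and the Cauchy--Schwarz step is tight precisely at the claimed value. The forward (upper bound) direction, by contrast, is just the explicit block construction plus a simplex minimization, so it should be routine; the subtlety is entirely in showing the construction is optimal, which is where the disjoint-support hypothesis $\mathcal{N}(\gamma_1,e)\mathcal{N}(\gamma_2,e)=0$ and the uniqueness of the $p=2$ extremal density are used.
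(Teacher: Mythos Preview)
Your argument is correct and rests on the same core idea as the paper---the serial decomposition of $T_{c,k}^a$ into the single edge $e_0=\{a,c\}$ and the subtree $T_{c,k-1}$---but you package it differently. The paper argues directly: fix $t:=\rho(a,c)$, observe that admissibility forces $\rho|_{E_c}/(1-t)\in\Adm(T_{c,k-1})$ and hence $\text{Energy}(\rho)\ge t^2+(1-t)^2\Mod_2(T_{c,k-1})$ with equality attainable, then minimizes the one-variable quadratic in $t$. You instead prove the general two-block serial rule $\Mod_2(\Gamma)^{-1}=\Mod_2(\Gamma_1)^{-1}+\Mod_2(\Gamma_2)^{-1}$ via matched upper and lower bounds and specialize to $\Mod_2(\Gamma_1)=1$; your simplex minimization over $(\lambda_1,\lambda_2)$ is exactly the paper's minimization over $t$ in disguise, and your Cauchy--Schwarz lower bound is what the paper gets for free by conditioning on $t$. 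One small point to tighten: your stated hypothesis ``every $\gamma\in\Gamma$ splits as $\gamma_1\cup\gamma_2$'' suffices for the upper bound, but for $m_1+m_2\ge 1$ you also use that every concatenation $\gamma_1\cup\gamma_2$ lies in $\Gamma$; this holds trivially in the tree case, so the application is unaffected.
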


\begin{proof}
	If $c$ is a leaf of $T^a_{k}$ , then $\rho(a, c) = 1$ is the minimizer for the modulus. Otherwise, by
	considering the density, $\rho(v, c)$, on the edge from $a$ to $c$, the optimization effectively decouples. In
	order for $\rho$ to be admissible, it is necessary that $d_\rho (c, l) \ge 1- \rho(a, c)$ for every leaf $l_{k-1}$ of $T_{c,k-1}$ at depth $k-1$.
	For $0 \le \ell \le 1$, define the parameterized set of admissible densities, for every leaf $l_{k-1}$
	\[
	\text{Adm} (T_{c,k-1} ;\ell ) := \lbrace \rho : E \rightarrow [0, \infty) : d(c, l_{k-1}) \le \ell \rbrace
	\]
	and the parameterized modulus problem
	\[
	\Mod_p'(T_{c,k-1} ;\ell ) =\inf_{\rho\in \text{Adm}'(T_{c,k-1};\ell)}\sum_{e\in E(T_c)} \rho(e)^2
	\]
	
	where $E(T_{c,k-1} )$ represents the set of edges in the subtree $T_{c,k-1}$ . It is straightforward to verify that
	\[
	\Mod_2'(T_{c,k-1};\ell) = \ell^2 \Mod_2(T_c)
	\]
	and, thus
	\begin{equation}
	\begin{split}
	\Mod_2(T^a_{c,k}) =& \inf_{0\le \rho(v,c) \le 1 }\lbrace \rho(a,c)^2 +\\
	& \Mod_2'(T_{c,k-1}:1-\rho(v,c)) \rbrace\\
	 =& \inf_{0\le \rho(a,c) \le 1 } \lbrace \rho(a,c)^2 +\\
	&(1- \rho(k,c))^2 \Mod_2(T_{c,k-1}) \rbrace
	\end{split}
	\end{equation}
	The infimum, given by \eqref{eq_0Recursion}, is attained when
	\[
	\rho(a,c) = \frac{\Mod_2(T_{c,k-1})}{1+\Mod_2(T_{c,k-1})}
	\]
\end{proof}

Lemmas \ref{lem_TC1} and \ref{lem_TC2} combined prove Equation \eqref{eq:recursion}.
%
%


Equation \eqref{eq:recursion} computes $\Mod_2(T^a(k))$ recursively. For each leaf
node $l_k$, set $\Mod_2(T_{l_k, 0}) = \infty$. Then \eqref{eq:recursion} will propagate the modulus to the ego. For example, to compute $\Mod_2(T_{\text{a},2})$ in the graph in Figure \ref{fig:S2L}(b), we start by assigning $\infty$ for modulus of the leaves $e$ and $f$. Then, by \eqref{eq:recursion}, each contributes $1$ to node $b$, and $\Mod_2(T_{\text{b},1}) =2$. Thus $\Mod_2(T^{\text{a}}(2)) =\frac{\Mod_2(T_{\text{b},1}) }{1+\Mod_2(T_{b,1}) } = \frac{2}{3}$.


\section{Shell Degree}\label{sec:gendeg}
In conclusion, Ahlfors' upper bound \eqref{eq:upperbound} considers all edges in the shell connecting sets even if they are not on the shortest paths, such as edge $a-d$ in Figure \ref{fig:S2L}(a). 
On the other hand, when using the ego-tree approximation, we inevitably lose valuable information hidden in the edges that where discarded.
For example, in Figure \ref{fig:S2L}(b-c), to form a tree we need to solve the child custody problem between parents $b$ and $c$ and child $f$. In particular, the lower bound calculation will discard at least one edge. Moreover, this leads to multiple possible lower bounds, \textit{e.g.}, $\Mod_2(T_{a,r}) =\frac{2}{3} $ in Figure \ref{fig:S2L}(b) and $\Mod_2(T_{a,r}) = 1$ for Figure \ref{fig:S2L}(d). 

\begin{figure}
	\subfloat[]{%
	\includegraphics[clip,width=.35\columnwidth]{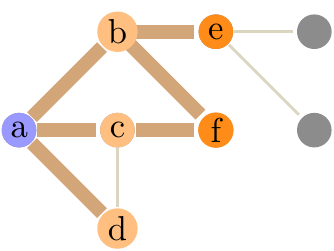}%
}~~~
	\subfloat[]{%
		\includegraphics[clip,width=.35\columnwidth]{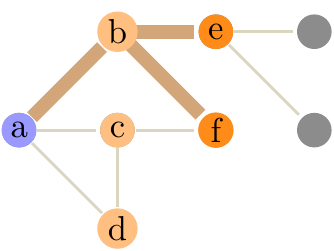}%
	}\\
	\subfloat[]{%
		\includegraphics[clip,width=.35\columnwidth]{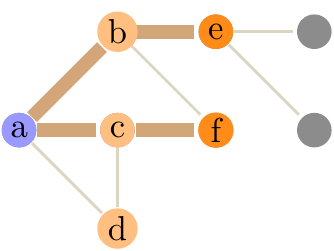}%
	}~~~
	\subfloat[]{%
	\includegraphics[clip,width=.35\columnwidth]{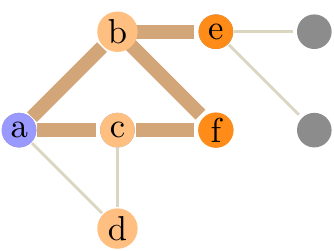}%
}
	\caption{(a) To compute the upper bound in  \eqref{eq:upperbound}, for ego $a$ and depth $k=2$, edge $\lbrace a,c\rbrace$ has the same role as edge $\lbrace a,d\rbrace$. (b) and (c) give different ways to obtain $T_2^a$. (d) shows the edges considered in shell degree.}\label{fig:S2L}
\end{figure}

As a compromise between the Ahlfors upper bound and the tree modulus lower bound, we propose a measure we call {\it shell degree}. 
Fix a depth $i=1,2,3,\dots,r$ and consider a tree rooted at the ego $a$, whose leaves are all contained in the shell $S(a,i)$, and such that the geodesics from the root to $S(a,i)$ take exactly $i$ hops. 
Let $H(a,i) = (V_i, E_i)$ be the union of all such trees found by breadth first search. For instance, in Figure \ref{fig:S2L}(d) we  show $H(\text{a},2)$ in that case. Note that we discarded nodes that are not on the geodesic paths from $a$ to $S(a,2)$.
 
Since, in general,  we cannot use the recursion \eqref{eq:recursion} on $H(a,r)$, we instead compute the upper bound \eqref{eq:upperbound}. Namely, we consider the shell connecting sets $E_i(a,k)$ for $H(a,i)$ and define the generalized shell degree to be the following expression:
\begin{equation}\label{eq:GenDeg}
\begin{split}
\text{gDeg}(a) := \sum_{i=1}^{r}\frac{1}{\sum_{k=1}^{i}\frac{1}{|E_i(a,k)|}}
\end{split}
\end{equation}
Observe that the first summand of \eqref{eq:GenDeg} is the ordinary degree of the ego and thus our formula acts as a generalization of degree which takes into account information about the shells around the ego. For example,  we have $E_1(a,1)=3$, $E_2(a,1)=2$, $E_2(a,2)=3$ in Figure \ref{fig:S2L}(d). For $r=2$, $\text{gDeg}(a) = 3+\frac{1}{\frac{1}{2}+\frac{1}{3}} = 3+6/5 = 4.2$.


We illustrate the differences between \eqref{eq:GenDeg} with \eqref{eq_ShellCFC}, \eqref{eq:upperbound}, and \eqref{eq:recursion} in Table \ref{tab:examples}  for the egonetwork in Figure \ref{fig:S2L}.

\begin{table*}
	\centering
	\caption{Examples for Shell modulus, bounds and shell degree.}
	\label{tab:examples}
	\begin{tabular}{|l|c| c|c|c|}
		\hline
		Quantity & $k=1$ & $k=2$ & $k=3$& total\\
		\hline
		\multirow{ 2}{*}{$\Mod(a,S_k)$} & \includegraphics{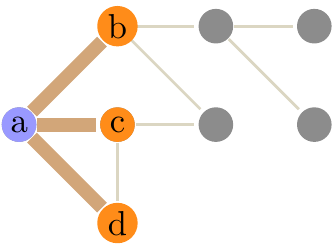} & \includegraphics{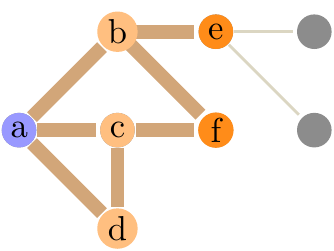}  & \includegraphics{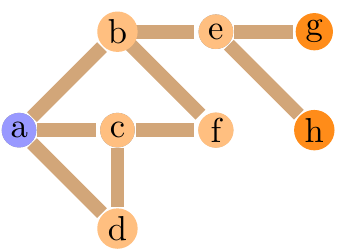}& \\
		& $3$& $1.26$ & $0.44$ & $4.71$ \\
		\hline
		\multirow{ 2}{*}{Lowerbound} & \includegraphics{S1.pdf} & \includegraphics{S2L.pdf} &  \includegraphics{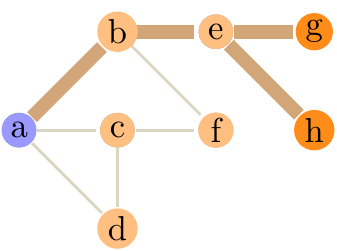} & \\
		& $3$ & $0.66$ & $0.4$ & $4.06$ \\
		\hline
		\multirow{ 2}{*}{Upperbpund} & \includegraphics{S1.pdf} & \includegraphics{S2U.pdf} & \includegraphics{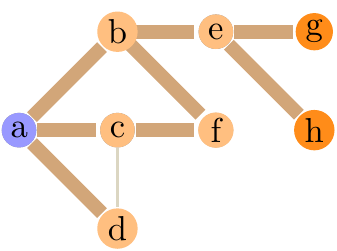} & \\
		& $3$ & $1.5$ & $0.85$ & $5.35$ \\
		\hline
		\multirow{ 2}{*}{Shell Degree} & \includegraphics{S1.pdf} & \includegraphics{S2g.pdf} & \includegraphics{S3L.pdf} & \\
		& $3$ & $1.2$ & $0.4$ & $4.6$ \\
		\hline
	\end{tabular}
\end{table*}


We can compute the summands in \eqref{eq:GenDeg} with  Algorithm \ref{alg:gdeg}. Normalization is unnecessary for shell degree, as in the case of degree,  which is critical when comparing centrality of different egos and there is no information about connections between their ego-networks.
\begin{figure}
	\begin{minipage}{\linewidth}
		\begin{algorithm}[H]
			\caption{Algorithm for computing  summands in \eqref{eq:GenDeg}.}
			\label{alg:gdeg}
			\begin{algorithmic}[1]
				\State  $D\leftarrow$ set of all descendants for each ancestor
				\State $r\leftarrow $ neighborhood order
				\State $k\leftarrow 1$
				\For{\textit{nodes} in $\lbrace S^r(a,k) ,k\le r\rbrace$}
				\State Update $D$ with \textit{nodes} as new descendants
				\State Removing ancestors that do not have any descendants in \textit{nodes}
				\State $k\leftarrow k+1$
				\EndFor				
				\State \textbf{return} harmonic means of number of ancestral relations in each $k$
			\end{algorithmic}
		\end{algorithm}
	\end{minipage}
\end{figure}

In short, we keep track of ancestral relations from the ego to nodes in each shells, and discard nodes that do not have any descendants in shell $r$; leading to required information about $H(a,r)$ and thus we can find summands in \eqref{eq:GenDeg}. The overall time complexity of calculating \eqref{eq:GenDeg} depends on the graph search in step 4 of Algorithm \ref{alg:gdeg} and keeping the information of ancestral relationships, i.e, for an ego  network $G^a(r)$  size $n_a$, algorithm performance is in $\mathcal{O}(rn_a)$.


We illustrate the performance of shell degree compared to the Ahlfors upper bound and the Tree modulus lower bound for conventional random network models such as 
Erd\H{o}s-R\'enyi networks, scale-free (Barabasi-Albert model \cite{barabasi1999emergence}), Spatial (geometric model in the unit square \cite{penrose2003random}), and small world (Watts-Strogatz model \cite{watts1998collective}). Figure \ref{fig:GenDegComparison} shows that shell degree gives a better approximation for $\mathcal{C}_{\text{shell}}(a,r)$ than the Ahlfors and Tree modulus estimates. 
\begin{figure*}
	\subfloat[]{%
		\includegraphics[clip,width=.8\columnwidth]{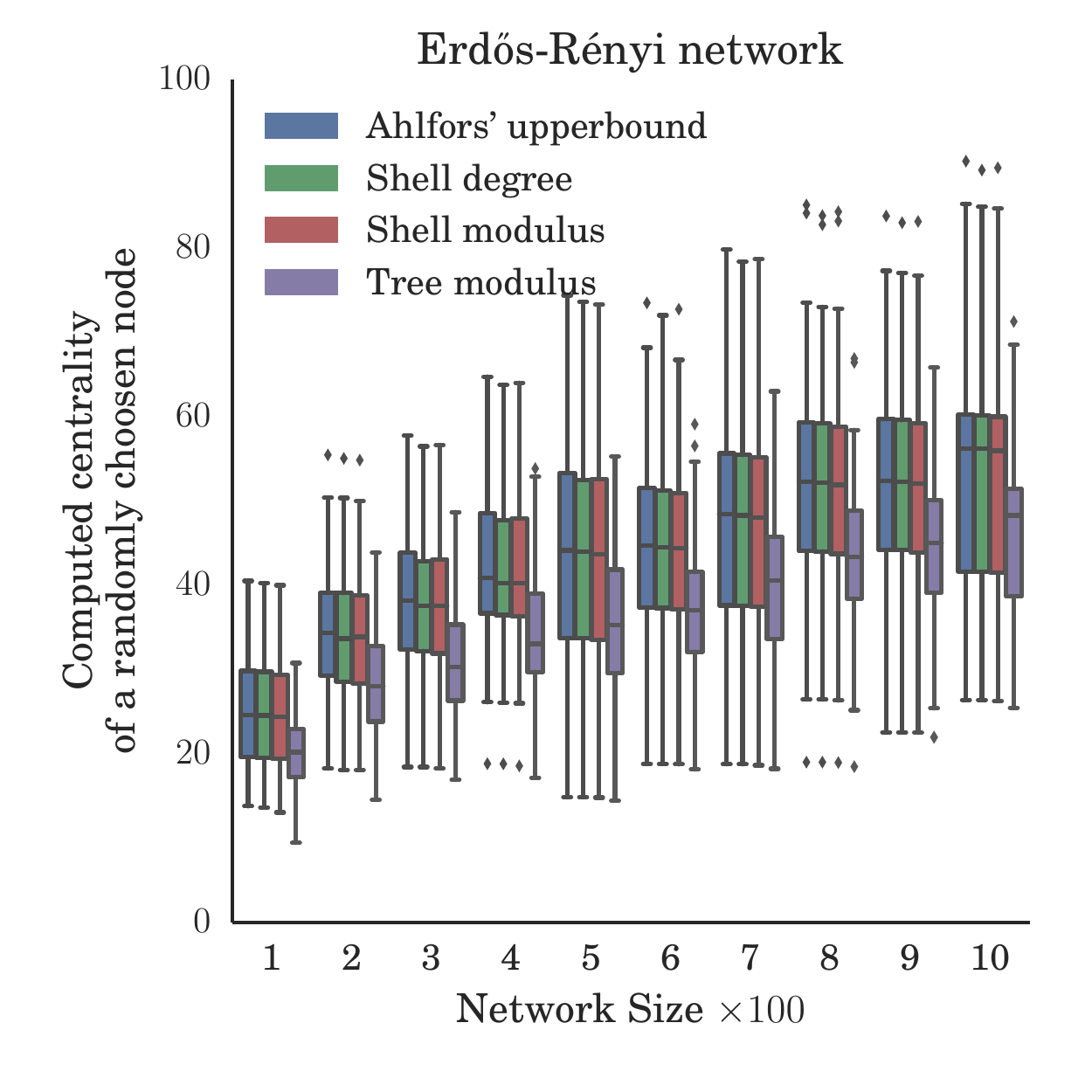}%
	}
	\subfloat[]{%
		\includegraphics[clip,width=.8\columnwidth]{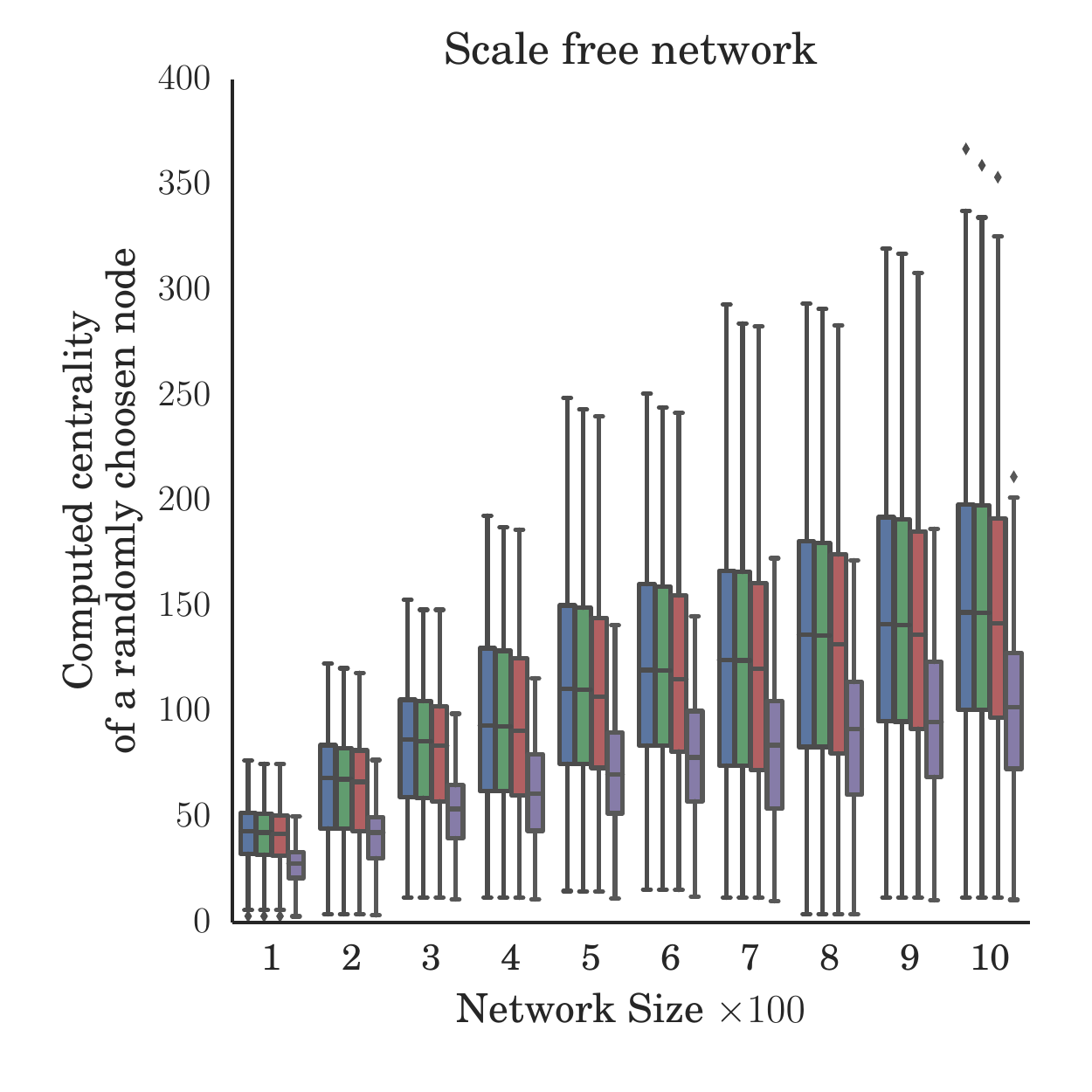}%
	}\\
	\subfloat[]{%
		\includegraphics[clip,width=.8\columnwidth]{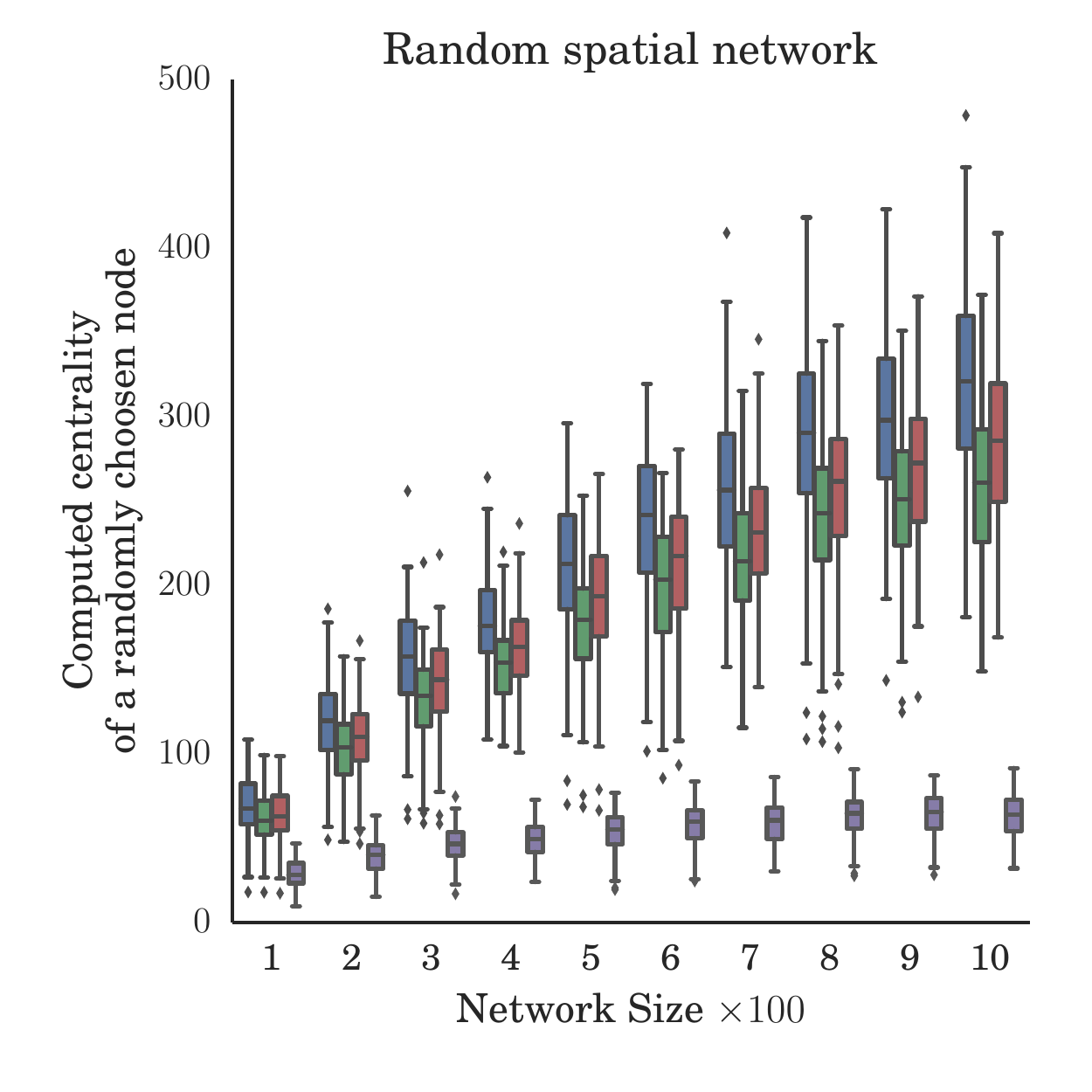}%
	}
	\subfloat[]{%
		\includegraphics[clip,width=.8\columnwidth]{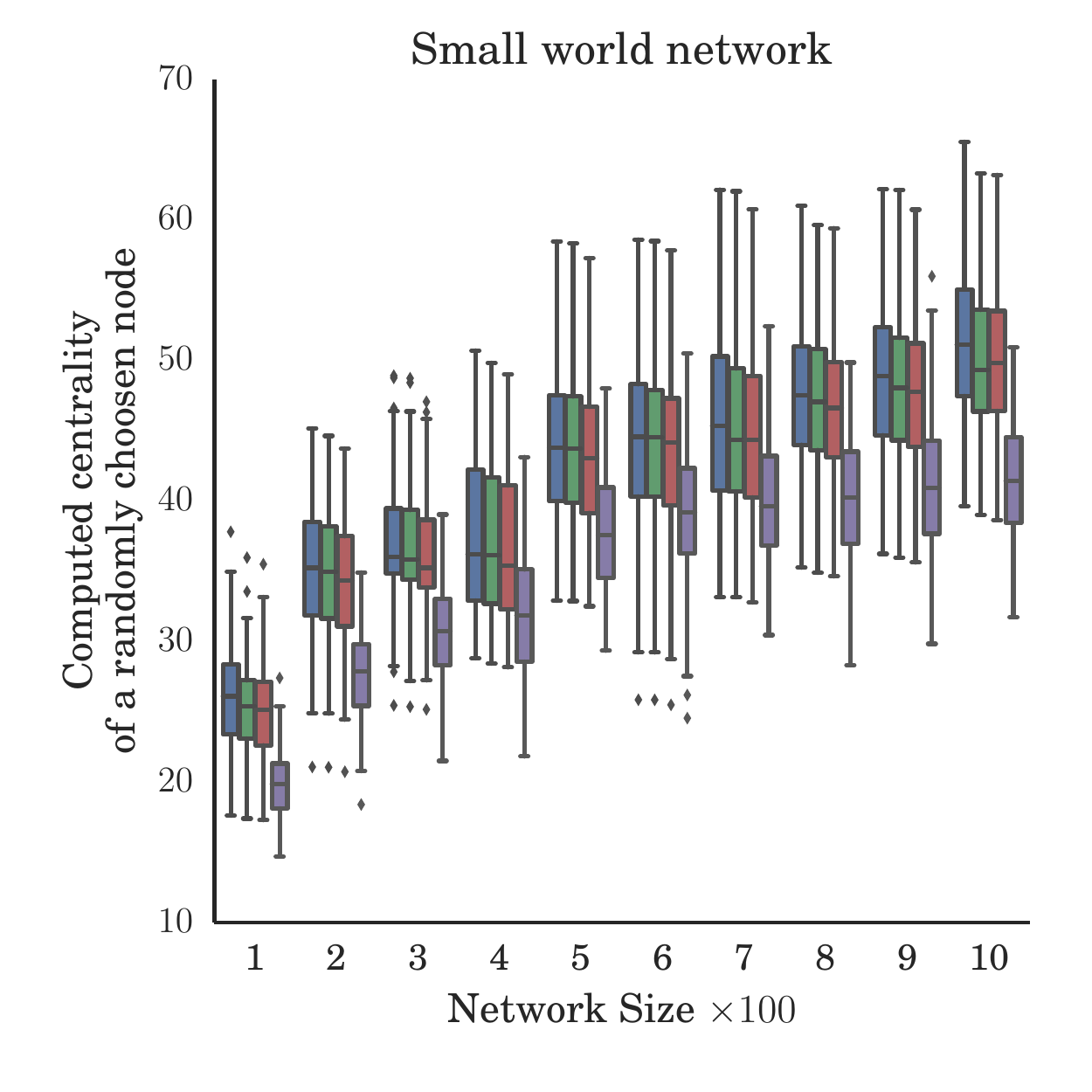}%
	}
	\caption{Comparing the value of the Ahlfors upper bound, Tree modulus lower bound, Shell degree, and Shell modulus in simulated random network models (a) Erd\H{o}s-R\'enyi networks with $p=2\log n/n$, (b) Scale free network by Barabasi and Albert model \cite{barabasi1999emergence} with $6$ edges preferential attachment. (c) Spatial network (random geometric network \cite{penrose2003random}) with distance threshold value $r= \sqrt{2\log n/n}$ and small world network by Watts-Strogatz model with initial degree of $2\log n$ and rewiring probability $0.3$. Shell degree is providing a fair estimate of shell modulus in these networks.}\label{fig:GenDegComparison}
\end{figure*}

We see that for egocentric network data with medium sizes and order of neighborhood, shell degree performs extremely well. However, it is possible to produce pathological network examples for which all of the estimates for shell modulus get worse as  $n,r\rightarrow \infty$, see Appendix \ref{ap_counterEx} for more details. 

\section{Applications of shell degree for targeted immunization strategies}\label{ap_vacc}
Targeted immunizations in computer networks and human populations can greatly impact the overall outcome of spreading processes \cite{pastor2002immunization, motter2002cascade, zhao2005enhanced}. Mitigating an epidemic with random immunization of nodes, requires vaccinating over $80\%$ of the population and thus identifying a good set of target nodes has attracted much attention \cite{chen2008finding,salathe2010dynamics}.

Most of the methods for finding good sets of nodes to immunize require global knowledge of the network, making them impossible to use in some practical situations. Therefore, scientists prefer algorithms that are agnostic relative to the global structure of the network. For example, \textit{acquaintance immunization} chooses random neighbors of randomly picked nodes \cite{cohen2003efficient}. 
In what follows, we illustrate the immunization performance of the approximation of the egocentric version of effective conductance that we call shell degree. We assume $r=3$, \textit{i.e.}, knowledge of neighbors together with neighbors of neighbors are available. The efficacy of immunization is compared to the popular egocentric measure of acquaintance centrality, and to sociocentric indices such as effective conductance, and betweenness and eigenvector centrality.

We consider the epidemic model ``susceptible, infected, recovered'' (SIR) that represents infectious processes that are not reversible. Susceptible nodes (S) in the network become infected (I) proportionally to the infectious rate $\beta$ and the number of infected neighbors, and 
eventually they rest in state (R) after a recovery period of $\frac{1}{\delta}$ days on average (see Figure \ref{fig:SIR}). We assume a constant $\delta=0.1$, \textit{i.e.}, nodes stay in state (I) an average of 10 days. To model widespread diseases such as the flu that are caused by close contacts, the infectious rate $\beta$ is chosen to have reproduction number $R_0 \sim \frac{\beta}{\delta}\langle k \rangle = 3 $, where $\langle k \rangle$ is the  average degree of the network \cite{salathe2010dynamics}.
\begin{figure}
	\includegraphics[clip,width=.9\columnwidth]{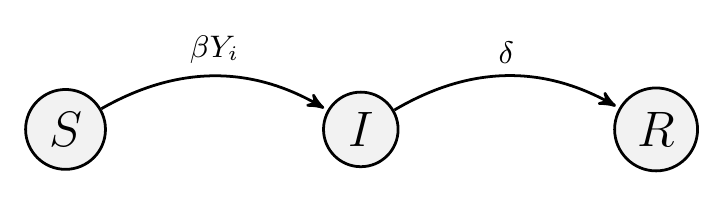}%
	\caption{Schematic of the transition graph of node $i$ in SIR moel. The infection and recovery rates are denoted by $\beta$ and $\delta$ and $Y_i$ is the number of neighbors in the infected state $I$. 
	}\label{fig:SIR}
\end{figure}
After updating the contact networks with the immunized nodes, we assess the performance of each strategy.
In our experiments, all nodes are initially susceptible and the infectious process starts from a randomly chosen patient zero. The performance of immunization strategies are  monitored by measuring the epidemic final size, \textit{i.e.}, number of nodes in  state (R) after there is no more (I) nodes.

We simulate the process $2000$ times for each immunization strategy and each immunization coverage. The simulations are done with GEMFsim, that employs event-based exact stochastic simulation \cite{sahneh2016gemfsim} for US power grid and PGP networks, and the friendship network for Princeton University extracted from Facebook \cite{traud2012social}. Salathe \textit{et. al.} \cite{salathe2010dynamics} suggest considering interactions of individuals in the same dormitory or same year and major, for the Facebook friendship networks, to capture potential physical networks--this makes the networks extremely modular.

In Figure \ref{fig_vacc_main}, each bar shows the difference of number of cases in the outbreak immunized with the two strategies shown on y-axis for a network. Positive difference (shown in red) means the alternate strategy performs better than shell degree and negative difference (shown in blue) means the shell degree gives better immunization and prevents more cases. We test the significance of comparisons of the obtained results using the nonparametric Mann-Whitney test with $\alpha = 0.05$ \cite{mann1947test} and statistically non-significant conclusions are shown by shaded colors. 

Effective conductance and betweenness centralitities perform better than shell degree for small immunization coverages. However, using sociocentric centrality measures to design targeted immunization strategies can overlook an important issue, namely, that after removing a fraction of the nodes in the network, the initial ranking by these measures is no longer valid. On the other hand, this is not as dramatic for egocentric measures such as degree, acquaintance, and the egocentric version of effective conductance and the resulting ranking is more robust after changes in the network \cite{costenbader2003stability}. 
Sociocentric measures generally struggle with this fact and thus searching for a good egocentric measure is critical. Therefore, with increasing immunization coverage, shell degree performs better (or similarly) compared to other methods. For strongly modular networks, e.g. the Princeton friendship network, closeness centrality measures are generally less efficient compared to betweenness centrality measures. However, in this case, shell degree is performing better than both eigenvector centrality and acquaintance immunization.

\begin{figure*}
	\centering
	\includegraphics[clip,width=2\columnwidth]{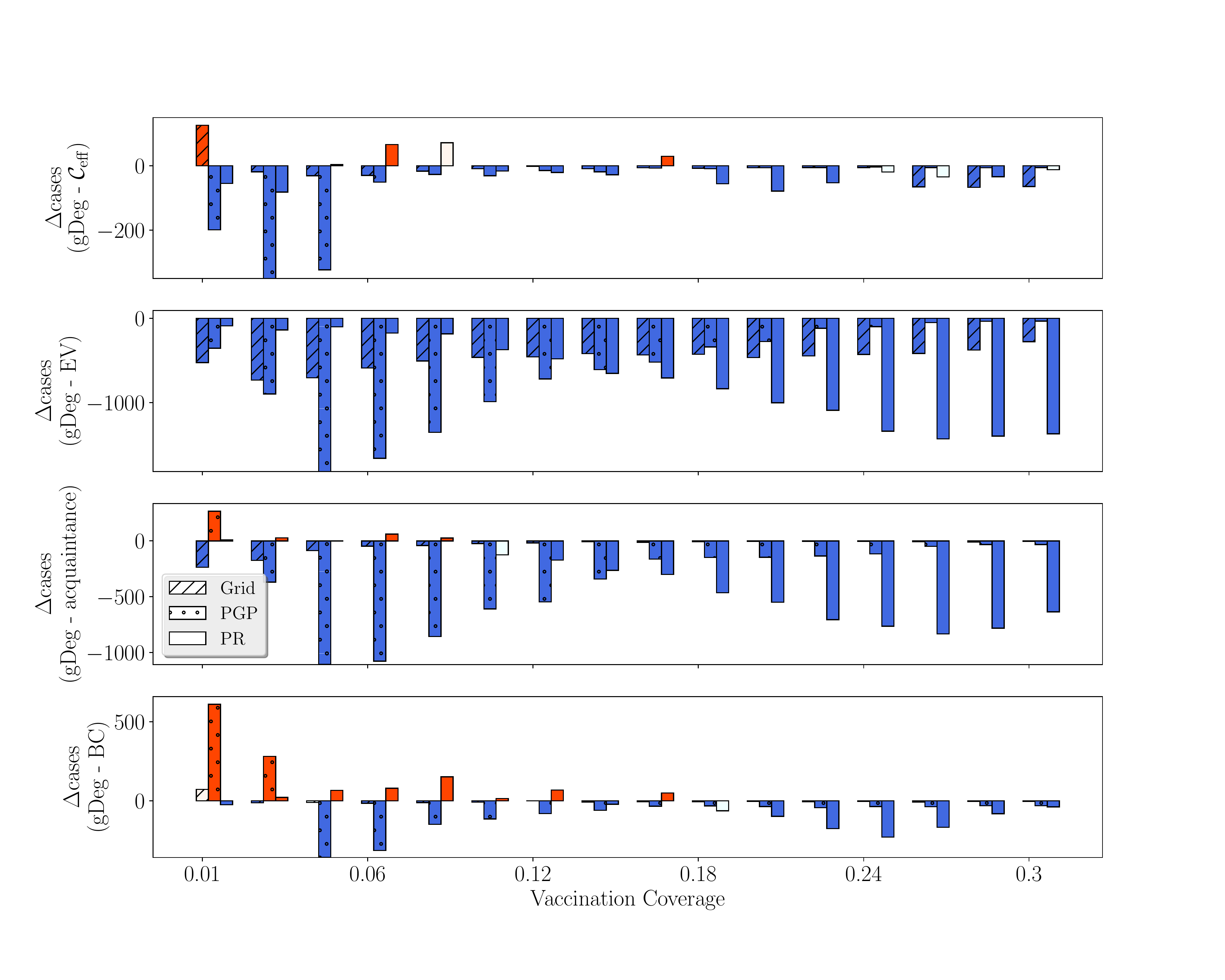}%
	\caption{Comparing different immunization strategies with effective conductance, acquaintance, eigenvector centrality, and betweenness centrality with the approximation of the egocentric version of effective conductance, that we call shell degree, with ($r=3$). The immunization coverage varies from $1\%$ to $30\%$ of the highest central nodes. Bars show the difference of the final size of the epidemic outbreak. Negative differences show that shell degree prevents more cases compared to the other policy. By increasing the coverage, shell degree outperforms other methods, since it is more robust to changes in the network structure. Results are inferred using $2000$ simulations of the SIR epidemic model and statistically nonsignificant results are shown by shaded bars.
		The empirical networks we consider are the US power grid (Grid) \cite{watts1998collective}, the PGP network (PGP) \cite{boguna2004models},  and a Facebook friendship network from Princeton university (PR) \cite{traud2012social}. }\label{fig_vacc_main}
\end{figure*}

\section{Conclusions}
In summary, we studied effective conductance centrality in the language of modulus of families of walks and in the context of egocentric networks. We compared our method to its  well-known sociocentric counterpart and illustrated the advantages of our approach. For undirected networks, shell modulus can be computed by solving a Laplacian system similar to \cite{ellens2011effective}. Moreover, for directed multi-edge networks, we propose approximations that carry the same benefits of the original definition while being easier to compute and scalable. Finally, we introduced a generalization of degree called shell degree. Applications of these tools  illustrate the advantages of the proposed measures, for instance to guide epidemic mitigation strategies under limited knowledge of the overall network.
\section*{Acknowledgments}
Authors are thankful to NSF grants DMS-1515810 and CIF-1423411.

\bibliographystyle{apsrev4-1}
\bibliography{refs}

\onecolumngrid
\appendix
\onecolumngrid
\section{Ahlfors upper bound for Erd\H{o}s-R\'enyi networks}
We want to estimate the expected Ahlfors upper bound in Erd\H{o}s-R\'enyi in the {\it connected} regime:
\[
p(N-1)=2\log N.
\] 
We can use the concavity property of Ahlfors bound and get
\[
\mathbb{E}\left( \sum_{i=1}^r\frac{1}{\sum_{k=1}^i \frac{1}{\theta_j}}\right)\le \sum_{i=1}^r\frac{1}{\sum_{k=1}^i \frac{1}{\mathbb{E}\theta_k}}
\]
we would like to estimate $\mathbb{E}(\theta_k)$.
\begin{itemize}
	\item First, note that $\theta_1$ is ${\rm Binomial}(N-1,p)$. So:
	\[
	\mathbb{E} \theta_1 =p(N-1),
	\]
	from the binomial distribution.
	\item Now, given $\theta_1$ we must toss $\theta_1$ variables distribute as ${\rm Binomial}(N-1-\theta_1,p)$, because the ego and the first shell are  now out of consideration. So
	\[
	\mathbb{E}\left( \theta_2\mid\theta_1\right)=\theta_1 p(N-1-\theta_1).
	\]
	Therefore, computing the second moment of $\theta_1$ we get:
	\[
	\mathbb{E} \theta_2 =\mathbb{E}(\mathbb{E}(\theta_2\mid\theta_1))=\mathbb{E}(\theta_1) p(N-1)-p\mathbb{E}(\theta_1^2)=p^2(1-p)(N-1)(N-2).
	\]
	\item Given $\theta_1$ and $\theta_2$ we must toss  a certain number $s$ of ${\rm Binomial}(N-1-\theta_1,p)$ random variables, where $s$ is the number of nodes in the second shell. However, this number $s$ is not easy to calculate because it depends on the interaction at the previous step. For instance, if all the binomial variables in the previous step are equal to zero, then $s=0$. But for higher values of $s$ it becomes quite complicated.
\end{itemize}

In particular, we will have
\[
\mathbb{E} \theta_1 = \log N\qquad\text{and}\qquad \mathbb{E} \theta_2\simeq (\log N)^2.
\]

\subsection{Lower bound for $\mathbb{E}(\theta_k)$}
First, we will estimate $\mathbb{E}\theta_k$ from below.
Given an ego $a$, Spielman~\cite{spielman2009spectral}  sets
\[
r(a):=\max\left\{r: |B(r,a)|\le \frac{N}{12\log N}\right\}
\]
and then shows that for $k\le r(a)$,
\[
\mathbb{P}\left[|S(a,k+1)|\le \frac{1}{5}\log N |S(a,k)|   \right]\le N^{-1.2|S(a,k)|}.
\]
He first finds that
\begin{equation}\label{eq:spielman}
\mathbb{E}\left[ |S(a,k+1)| \mid G^a(k)  \right] \ge \frac{5}{3}|S(a,k)|\log N,
\end{equation}
and then applies the theory of Chernoff bounds.
Note that by simply taking the expectation in (\ref{eq:spielman}) we get
\[
\mathbb{E}|S(a,k+1)|\ge \frac{5}{3}(\log N) \mathbb{E}|S(a,k)|.
\]
This gives  geometric growth for $k\le r(a)$:
\begin{equation}\label{eq:growth-exp}
\mathbb{E}|S(a,k)|\geq (\log N)^k.
\end{equation}
In our case, since every $c\not\in B(a,k)$ must toss $|S(a,k)|$ biased coins, we get
\[
\mathbb{E}\left[ \theta_{k+1}\mid G^a(k)\right] = |S(a,k)|p(N-|B(a,k)|)\ge \frac{11}{12}|S(a,k)|pN =\frac{11}{6}(\log N)|S(a,k)|.
\]
Again, we can take expectations and get
\[
\mathbb{E} \theta_{k+1} \ge \frac{11}{6}(\log N)\mathbb{E}|S(a,k)|.
\]
Using (\ref{eq:growth-exp}), we get
\[
\mathbb{E}\theta_{k}\ge (\log N)^{k}.
\]

\subsection{Upper bound for $\mathbb{E}\theta_k$}
To get an upper bound we can compare the growth in the Erd\H{o}s-R\'enyi graph with the growth for a Galton-Watson branching process with offspring distribution $X={\rm Binomial}(N-1,p)$. This will be larger because there are no collisions and we always toss the maximum number of coins.
If $Z_k$ is the population at time $k$, then
\[
\mathbb{E} Z_k =\mu^k
\]
where $\mu=\mathbb{E} X=p(N-1)=2\log(N)$ and we get that
\[
\mathbb{E} \theta_k\le (2\log N)^k.
\]

\subsubsection{Upper bound for the Ahlfors estimate}
We can apply this to our estimate of the average Ahlfors upper bound and get that:
\begin{align*}
\mathbb{E}\left( \sum_{k=1}^r\frac{1}{\sum_{j=1}^k \frac{1}{\theta_j}}\right)
& \le \sum_{k=1}^r\frac{1}{\sum_{j=1}^k \frac{1}{\mathbb{E}\theta_j}}\\
& \le \sum_{k=1}^r\frac{1}{\sum_{j=1}^k \frac{1}{(2\log N)^j}}\\
& = \left(2\log N-1\right)\sum_{k=1}^r \left[1+\frac{1}{(2\log N)^{k}-1}\right]\\
& \simeq \left(2\log N-1\right)\left[ r+\sum_{k=1}^r\frac{1}{(2\log N)^k}\right]\\
& =\left(2\log N-1\right)\left[r+\frac{1}{2\log N}\frac{1-\left(\frac{1}{2\log N}\right)^r}{1-\frac{1}{2\log N}}\right]\\
& = \left(2\log N-1\right)\left[r+1-\frac{1}{(2\log N)^r((2\log N)-1)}\right]\\
& \simeq (2\log N-1)(r+1)
\end{align*}

\section{Behavior of shell modulus estimates when $n,r \rightarrow \infty$}\label{ap_counterEx}
\subsection{Modulus on the complete graph}

Verifying that a metric $\rho$ is extremal for $p$-modulus can be done using Beurling's criterion (proof in \cite{albin2016minimal}).  
\begin{theorem}[Beurling's Criterion for Extremality]
	\label{thm:beurling}
	Let $G$ be a simple graph, $\Gamma$ a family of walks on $G$, and $1<p<\infty$.  
	Then, a density $\rho\in \Adm(\Gamma)$ is extremal for $\Mod_p(\Gamma)$, if there is a subfamily $\tilde{\Gamma}\subset\Gamma$ with $\ell_\rho(\gamma)=1$ for all $\gamma\in \tilde{\Gamma}$, such that for all $h\in\mathbb{R}^E$:
	\begin{equation}
	\label{eq:beurling}
	\mbox{$\sum_{e\in E}\mathcal{N}(\gamma,e)h(e)\geq 0$,\quad for all $\gamma\in\tilde{\Gamma}$}\quad\Longrightarrow\quad\sum_{e\in E}h(e)\rho^{p-1}(e)\geq 0.
	\end{equation}
\end{theorem}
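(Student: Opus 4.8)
The plan is to prove the criterion by the standard first-order/convexity argument for the $\ell^p$-energy. Since $G$ is simple we may take $\sigma\equiv 1$, so that $\text{Energy}_p(\rho)=\sum_{e\in E}\rho(e)^p$, and recall that admissible densities are nonnegative and that $\ell_\rho(\gamma)=\sum_{e\in E}\mathcal{N}(\gamma,e)\rho(e)$ for a walk $\gamma$. Let $\rho\in\Adm(\Gamma)$ admit a subfamily $\tilde\Gamma\subset\Gamma$ as in the hypothesis, and let $\eta\in\Adm(\Gamma)$ be an arbitrary competitor; the goal is to show $\text{Energy}_p(\eta)\ge\text{Energy}_p(\rho)$. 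First I would set $h:=\eta-\rho\in\mathbb{R}^E$ and observe that for every $\gamma\in\tilde\Gamma$, using admissibility of $\eta$ and tightness of $\rho$ on $\tilde\Gamma$,
\[
\sum_{e\in E}\mathcal{N}(\gamma,e)h(e)=\ell_\eta(\gamma)-\ell_\rho(\gamma)\ge 1-1=0 ,
\]
so the premise of the implication in \eqref{eq:beurling} holds for this particular $h$, and therefore $\sum_{e\in E}h(e)\rho^{p-1}(e)\ge 0$, i.e.\ $\sum_{e\in E}\bigl(\eta(e)-\rho(e)\bigr)\rho(e)^{p-1}\ge 0$.

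Next I would invoke convexity of $t\mapsto t^{p}$ on $[0,\infty)$ for $1<p<\infty$ in tangent-line form: $t^{p}\ge s^{p}+p\,s^{p-1}(t-s)$ for all $s,t\ge 0$. Applying this edgewise with $t=\eta(e)\ge 0$ and $s=\rho(e)\ge 0$ and summing over $E$ yields
\[
\text{Energy}_p(\eta)=\sum_{e\in E}\eta(e)^{p}\ \ge\ \sum_{e\in E}\rho(e)^{p}+p\sum_{e\in E}\rho(e)^{p-1}\bigl(\eta(e)-\rho(e)\bigr)\ \ge\ \text{Energy}_p(\rho),
\]
the last inequality being exactly the consequence of \eqref{eq:beurling} derived above. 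Since $\eta\in\Adm(\Gamma)$ was arbitrary, $\rho$ minimizes the $p$-energy over $\Adm(\Gamma)$; that is, $\rho$ is extremal for $\Mod_p(\Gamma)$.

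I do not anticipate a genuine difficulty: the argument reduces to a single convexity estimate once the right increment $h=\eta-\rho$ is substituted into \eqref{eq:beurling}. The two points deserving explicit care are (i) the boundary behavior on edges with $\rho(e)=0$, where one must note that $\rho(e)^{p-1}=0$ and the tangent-line inequality degenerates to $\eta(e)^p\ge 0$ — both valid precisely because $p>1$; and (ii) the legitimacy of feeding the specific vector $h=\eta-\rho$ into the hypothesis, which is permitted because \eqref{eq:beurling} is quantified over \emph{all} $h\in\mathbb{R}^E$ and the required premise for this $h$ is exactly supplied by admissibility of $\eta$ together with the normalization $\ell_\rho(\gamma)=1$ on $\tilde\Gamma$. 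If one additionally wants uniqueness of the extremal density for $1<p<\infty$, it follows from strict convexity of $t\mapsto t^{p}$, though the statement as written claims only extremality.
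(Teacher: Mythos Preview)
The paper does not actually supply its own proof of this theorem; it simply cites an external reference (``proof in \cite{albin2016minimal}'') and then uses the criterion as a black box. So there is nothing in the paper to compare against.

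Your argument is correct and is the standard first-variation/convexity proof of Beurling's criterion. Testing the hypothesis \eqref{eq:beurling} on the specific increment $h=\eta-\rho$ for an arbitrary admissible competitor $\eta$ yields the first-order inequality $\sum_{e}(\eta(e)-\rho(e))\rho(e)^{p-1}\ge 0$, and the tangent-line inequality for the convex function $t\mapsto t^{p}$ on $[0,\infty)$ then upgrades this to $\mathrm{Energy}_p(\eta)\ge \mathrm{Energy}_p(\rho)$. Your handling of the boundary case $\rho(e)=0$ (where $\rho(e)^{p-1}=0$ and the convexity inequality reduces to $\eta(e)^p\ge 0$) is exactly right and is the only place the restriction $p>1$ is genuinely used.
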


The {\it complete graph} $K_N$ is a simple graph on $N$ nodes, where every node is connected to each other, see Figure \ref{fig:complete}.

\begin{figure}[h!]
	\centering
	\includegraphics[trim={0cm 5cm 0cm 5cm},clip,width= 0.3\linewidth]{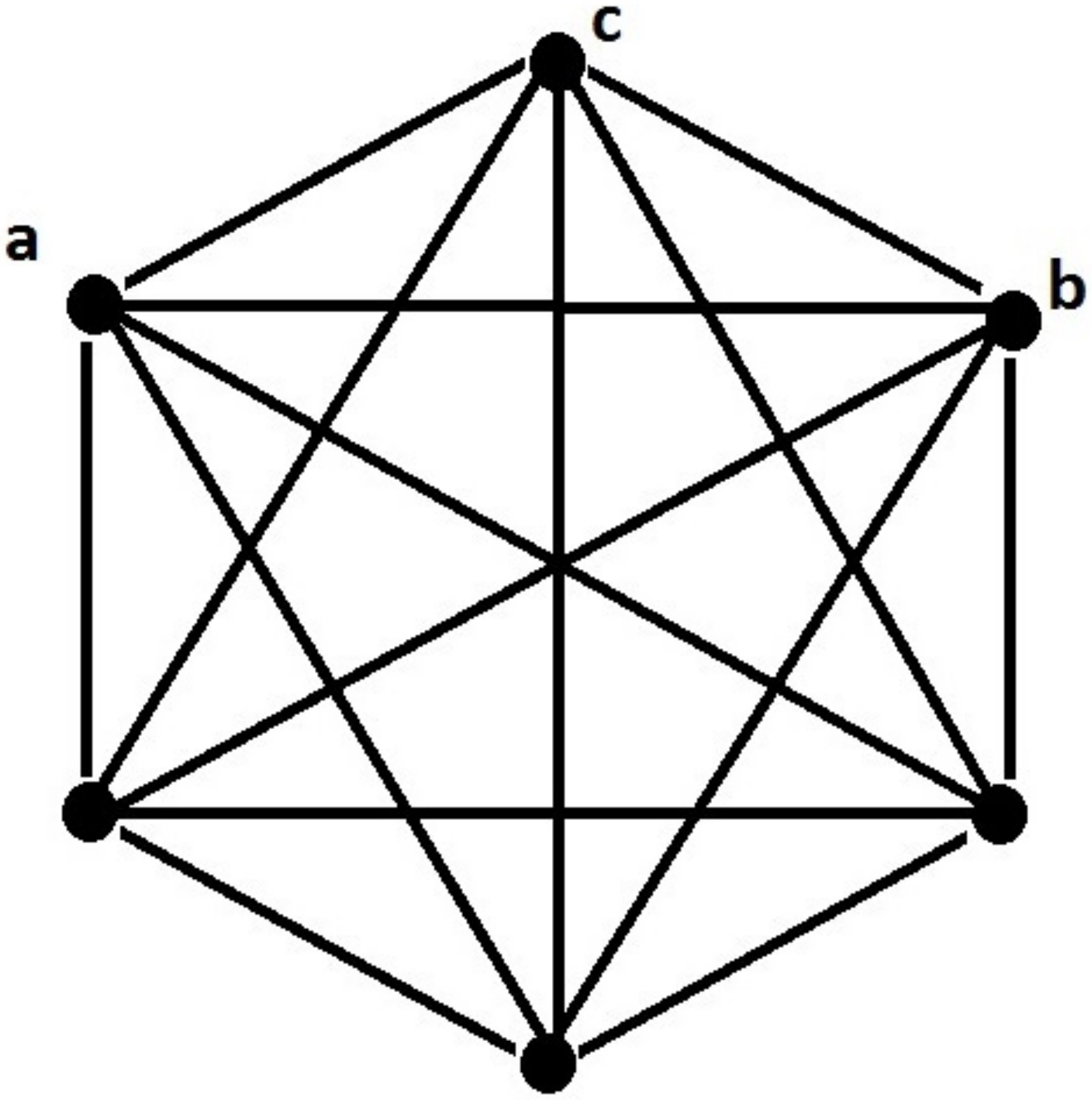}
	\caption{$K_6$- Complete graph on 6 nodes}\label{fig:complete}
\end{figure}

Figure \ref{fig:rho-complete} depicts the extremal density $\rho^*$ for  $\Gamma(a,b)$ in  $K_N$.

\begin{figure}[h!]
	\centering
	\includegraphics[trim={0cm 8cm 0cm 8cm},clip,width= 0.4\linewidth]{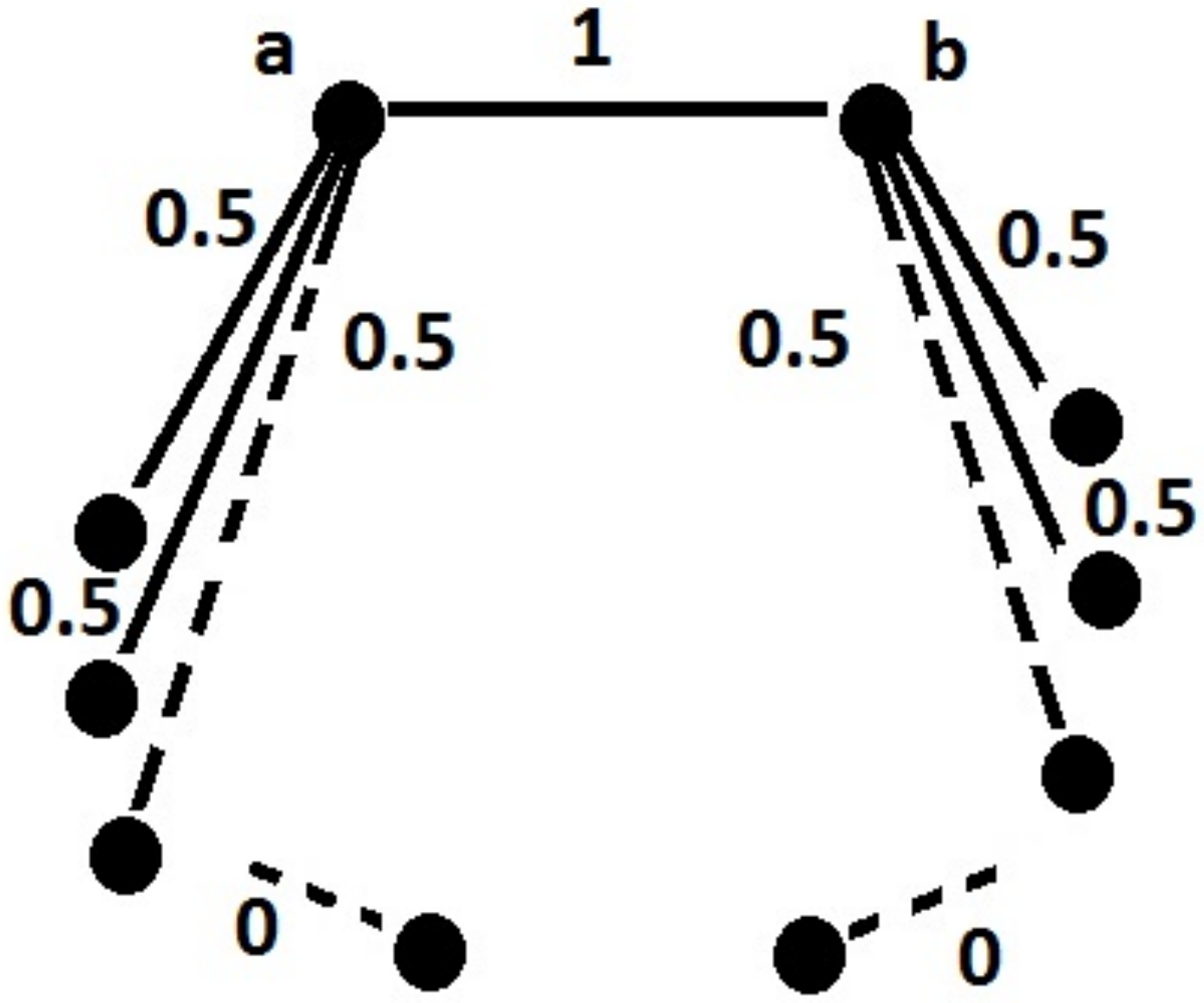}
	\caption{$\rho^*$ for $\Gamma(a,b)$ on $K_N$}\label{fig:rho-complete}
\end{figure}

In formulas, $\rho^*(a,x)=1/2=\rho^*(b,x)$ for every $x\neq a,b$, and $\rho^*(a,b)=1$, otherwise $\rho^*$ is zero. To verify Beurling's criterion, consider the subfamily $\tilde{\Gamma}$ of simple paths consisting of $a\ b$ and $a\ x\ b$ for any $x\neq a,b$.
We get that
\[
\Mod_p(\Gamma(a,b)) = 1+ 2 (N-2) \frac{1}{2^p}\qquad\text{and}\qquad\Mod_2(\Gamma(a,b)=\frac{N}{2}.
\]

Take $n$ complete graphs $K_1,\dots , K_n$.

\subsection{Modulus on a chain of complete graphs}
\paragraph{Constant sizes}

For $j=1,\dots,n$, assume that $|V(K_j)|=N$ , and pick a pair of distinct nodes $x_{j-1},y_j\in V(K_j)$. Then, for $j=1,\dots,n-1$, glue $y_j\in V(K_j)$ to $x_j\in V(K_{j+1})$.
We denote the resulting graph by $G(N,n)$.

For convenience, for $j=1,\dots,n$, we write $A_j:=V(K_j)\setminus\{x_{j-1},y_j\}$, so that the shell at level $j$ is $S_j=V(K_j)\setminus\{x_{j-1}\}=A_j\cup\{y_j\}$.
Then, fix $m=1,\dots,n$, and for $j=1,\dots,m-1$, define the following density on $\in E(K_j)$:
{\renewcommand\arraystretch{1.8} 				
	\[
	\rho^*(e):=\left\{ 
	\begin{array}{ll}
	\frac{1}{m} & \text{if $e=\{x_{j-1},y_j\}$}\\
	\frac{1}{2m} & \text{if $e=\{x_{j-1},a\}$ or $e=\{y_j,a\}$ for some $a\in A_j$}\\
	0 & \text{otherwise}
	\end{array}
	\right.
	\]
}
For $j=m$, and $e\in E(K_m)$,  set
{\renewcommand\arraystretch{1.8} 				
	\[
	\rho^*(e):=\left\{ 
	\begin{array}{ll}
	\frac{1}{m} & \text{if $e=\{x_{m-1},a\}$ for some $a\in A_m\cup\{y_m\}$}\\
	0 & \text{otherwise}
	\end{array}
	\right.
	\]
}
Observe that the support of $\rho^*$ can be decomposed as the disjoint union of $N-1$ paths. To see this, enumerate each $A_j=\{a_{j,k}\}_{k=1}^{N-2}$. Then, for $k=1,\dots,N-2$, let 
\[
\gamma_{m,k} := x_0\ a_{1,k}\ x_1\ a_{2,k}\ \cdots\ x_{m-1}\ a_{m,k}.
\]
Finally set
\[
\gamma_{m,0} := x_0\ y_1\ \cdots\ x_{m-1}\ y_m.
\]
One can check that $\tilde{\Gamma}=\{\gamma_{m,k}\}_{k=0}^{N-2}$ is a Beurling subfamily for the shell modulus $\Mod_2(x_0,S_m)$. So
\[
\Mod_2(x_0,S_m)=\frac{1}{m}+(N-2)\left[\frac{2m-2}{4m^2}+\frac{1}{m^2}\right]=\frac{N}{2m}\left(1+\frac{1}{m}\right)-\frac{1}{m^2},
\]
which is roughly $N/(2m)$. Also note that for $m=1$ we recover the degree of $x_0$. If we sum we get
\[
\sum_{m=1}^n \Mod_2(x_0,S_m)\simeq\frac{N}{2}\sum_{m=1}^n \frac{1}{m}\simeq \frac{N}{2}\log n.
\]
The Ahlfors upper bound gives
\[
\sum_{m=1}^n\frac{1}{\sum_{j=1}^m\frac{1}{N-1}}=(N-1)\sum_{m=1}^n\frac{1}{m}\simeq (N-1)\log n.
\]
The generalized shell degree, gives
\[
\sum_{m=1}^n\frac{1}{m-1+\frac{1}{N-1}}\simeq N + \log n
\]

\paragraph{Increasing sizes}

Now we repeat the construction above, but this time, setting $k_j:=|V(K_j)|$, we have $k_1=\alpha_1+2$ and,
for $j=2,\dots,n$, we assume that $k_j=\alpha_j (k_{j-1}-2)+2$, for an increasing sequence of positive integers $\{\alpha_j\}_{j=2}^n$.

Then, fix $m=1,\dots,n$, and for $j=1,\dots,m-1$, define the following density on $\in E(K_j)$:
{\renewcommand\arraystretch{1.8} 				
	\[
	\rho^*(e):=\left\{ 
	\begin{array}{ll}
	\frac{\prod_{k=j+1}^m\alpha_{k}}{1+\sum_{j=1}^{m}\prod_{k=j+1}^m\alpha_{k}} & \text{if $e=\{x_{j-1},y_j\}$}\\
	\frac{2^{-1}\prod_{k=j+1}^m\alpha_{k}}{1+\sum_{j=1}^{m}\prod_{k=j+1}^m\alpha_{k}} & \text{if $e=\{x_{j-1},a\}$ or $e=\{y_j,a\}$ for some $a\in A_j$}\\
	0 & \text{otherwise}
	\end{array}
	\right.
	\]
}
For $j=m$, and $e\in E(K_m)$,  set
{\renewcommand\arraystretch{1.8} 				
	\[
	\rho^*(e):=\left\{ 
	\begin{array}{ll}
	\frac{1}{1+\sum_{j=1}^{m}\prod_{k=j+1}^m\alpha_{k}} & \text{if $e=\{x_{m-1},a\}$ for some $a\in A_m\cup\{y_m\}$}\\
	0 & \text{otherwise}
	\end{array}
	\right.
	\]
}
Now form $k_m-1$ paths.
Set
\[
\gamma_{m,0} := x_0\ y_1\ \cdots\ x_{m-1}\ y_m.
\]
As before, enumerate each $A_j=\{a_{j,k}\}_{k=1}^{k_j-2}$.
Now, $k_m-2=\alpha_m(k_{m-1}-2)$, so we can group the $k_m-2$ edges $\{x_{m-1},a\}$ for $a\in A_m$ into $k_{m-1}-2$ groups of $\alpha_m$ edges. Each such group will then flow through a different node in $A_{m-1}$, and then we repeat. The claim is that this gives rise to a Beurling family of paths $\tilde{\Gamma}$. By construction, they all have $\rho^*$ length equal to $1$. We only need to check Beurling's criterion. So suppose $h\in \mathbb{R}^E$ satisfies
\[
\ell_h(\gamma)\ge 0\qquad\text{for all $\gamma\in\tilde{\Gamma}$}.
\]
Then $\sum_{e\in E}\rho^*(e)h(e)$ is equal to:
\[
\sum_{j=1}^m (\rho^* h)(x_{j-1},y_j)+\sum_{j=1}^{m-1}\sum_{i=1}^{k_j-2}[(\rho^* h)(x_{j-1},a_{j,k})+(\rho^* h)(a_{j,k},y_j)]+\sum_{i=1}^{k_m-2} (\rho^* h)(x_{m-1},a_{m,k}).
\]
And if we write $\alpha:=1+\sum_{j=1}^{m}\prod_{k=j+1}^m\alpha_{k}$, and collect terms, this equals
\[
\alpha^{-1}\left(
\alpha\sum_{j=1}^m h(x_{j-1},y_j)+\sum_{j=1}^{m-1}\left(\prod_{k=j+1}^m\alpha_{k}\right)\sum_{i=1}^{k_j-2}[h(x_{j-1},a_{j,k})+h(a_{j,k},y_j)]+\sum_{i=1}^{k_m-2} h(x_{m-1},a_{m,k})\right).
\]
which is $\ge 0$, because for every $j=1.\dots,m-1$
\[
(k_j-2)\prod_{k=j+1}^m\alpha_{k}=k_m-2
\]
So we get
\[
\Mod_2(x_0,S_m)=\alpha^{-2}\left(1+\frac{3}{2}(k_m-2)\sum_{j=1}^{m}\prod_{k=j+1}^m\alpha_{k} +(k_m-2) \right)
\]

Now choose $\alpha_j\equiv 2$. Then
\[
\alpha=1+2+4+\cdots+2^{m-1}=2^m -1.
\]
Also
\[
k_m-2=2^{m-1}\alpha_1
\]
So
\[
\Mod_2(x_0,S_m)\simeq  \alpha_1.
\]
And
\[
\sum_{m=1}^n\Mod_2(x_0,S_m)\simeq \alpha_1 n.
\]
On the other hand the shell degree is
\[
\sum_{m=1}^n \frac{1}{m-1+\frac{1}{k_m-1}}\simeq\log n.
\]

\end{document}